\newenvironment{Figure}
  {\par\medskip\noindent\minipage{\linewidth}}
    {\endminipage\par\medskip}
\newcommand{\oomit}[1]{}
	\newcommand{\keywords}[1]{\par\addvspace\baselineskip
\noindent\keywordname\enspace\ignorespaces#1}
\newcommand{\yy}{\mathbf{y}}
\newcommand{\xx}{\mathbf{x}}
\newcommand{\bfa}{\mathbf{a}}
\newcommand{\RR}{\mathbb{R}}
\def \NN  { {\mathbb N } }
\def \T {{{\mathcal T}}}
\def \I {I }
\def \II { {\mathcal I} }
\def \alg { {\tt SN\_Interpolants} }
\def \ualg { {\tt RSN\_Interpolants}}
\def \sdp {{\bf SDP}}
\def \imply { \Rightarrow}
\def \MM { \mathcal{M}}
\def \CC {\mathcal{C}}
\def \cand {{\tt Certificate\_Generation}}
\def \aisat {{\tt AiSat}}
\def \csdp  {{\tt CSDP}}
\def \sostool {{\tt SOSTOOLS}}
\begin{document}

\title{ Generating Non-Linear Interpolants by Semidefinite Programming}

\author{Liyun Dai\inst{1,3}\thanks{Corresponding author} \and Bican Xia\inst{1} \and Naijun Zhan\inst{2}}
\institute{LMAM \& School of Mathematical Sciences, Peking University \and
State Key Laboratory of Computer Science, Institute of Software, CAS \and
Beijing International Center for Mathematical Research, Peking University \\
 \email{dailiyun@pku.edu.cn ~ xbc@math.pku.edu.cn ~ znj@ios.ac.cn}
}
\date{}
\maketitle
\begin{abstract}
   Interpolation-based techniques have been widely and successfully applied in the verification  of hardware and software, e.g., in bounded-model checking, CEGAR, SMT, etc., whose hardest part is how to synthesize interpolants.
   Various work for discovering interpolants for propositional logic, quantifier-free fragments of first-order theories and their combinations  have been proposed. However, little work focuses on discovering polynomial interpolants in the literature. In this paper, we provide an  approach for constructing non-linear interpolants  based on semidefinite programming, and show how to apply such results to the verification of programs by examples.
\keywords {Craig interpolant, Positivstellensatz Theorem, semidefinite programming, program verification.}
\end{abstract}

\section{Introduction}\label{sec:int}
It becomes a grand challenge how to guarantee the correctness of software, as our modern life more and more
depends on computerized systems. There are lots of verification techniques based on either
   model-checking \cite{CE81}, or  theorem proving \cite{Issabelle,PVS}, or abstract interpretation \cite{cousot77}, or their combination, that have been invented for
 the verification of hardware and software, like bounded model-checking \cite{bmc}, CEGAR \cite{cegar}, satisfiability modulo theories (SMT) \cite{smt}, etc.
  The bottleneck of these techniques is scalability, as many of real software are very complex with different features like complicated data structures, concurrency, distributed, real-time and hybrid, and so on.
    While interpolation-based techniques provide a powerful mechanism for local and modular reasoning, which  indeed improves  the scalability of these techniques, in which the notion of Craig interpolants plays a key role.

   Interpolation-based local and modular reasoning was first applied in theorem proving due to Nelson and Oppen \cite{NelsonOppen}, called Nelson-Oppen method. The basic idea of Nelson-Oppen method is to reduce the satisfiability (validity) of a composite theory into the ones of its component theories whose satisfiability (validity) have been obtained. The hardest part of the method, which also determines the efficiency of the method, is to construct a formula using the common part of the component theories for a given formula of the composite theory with Craig Interpolation Theorem. In the past decade, the Nelson-Oppen method was further
   extended to SMT which is based on DPLL \cite{dpll} and
     Craig Interpolation Theorem \cite{craig} for combining different decision procedures in order to verify a property of programs with complicated data structures. For instance, Z3 \cite{Z3} integrates more than 10 different decision procedures up to now, including propositional logic, equality logic with uninterpreted functions, Presburger arithmetic, array logic, difference arithmetic, bit vector logic, and so on.

  In recent years, it is noted that
    interpolation based local and modular reasoning is quite helpful for improving the scalability of
       model-checking, in particular for bounded model-checking of systems with finite or infinite states \cite{bmc,Mcmilian03,Mcmilian05}, CEGAR \cite{predab,Henzinger04}, etc.
     McMillian first considered how to combine Craig interpolants with bounded model-checking to verify infinite state systems.   The basic idea of his approach is to generate invariant using Craig interpolants so that
     it can be claimed that an infinite state system satisfies a property after $k$ steps in model-checking whenever
     an invariant is obtained which is strong enough to guarantee the property. While in \cite{predab,Henzinger04,Mcmilian06}, how to apply the local property of Craig interpolants generated from
      a counter-example to refine the abstract model in order to exclude the spurious counter-example in CEGAR was investigated.
      Meanwhile, in \cite{wang11}, using interpolation technique to generate a set of atomic predicates as the base of machine-learning based verification technique was investigated by Wang et al.

      Obviously, synthesizing Craig interpolants is the cornerstone of interpolation based techniques. In fact, many approaches have been proposed in the literature.  In \cite{Mcmilian05}, McMillian presented a method for deriving Craig interpolants from proofs in the quantifier-free theory of
 linear inequality and uninterpreted function symbols, and based on which an interpolating theorem prover was
 provided.   For improving the efficiency of constructing interpolant, McMillian further
 proposed a method based on  lazy abstraction for generating interpolants.  While, in  \cite{Henzinger04},   Henzinger et al. proposed a method to synthesizing Craig interpolants for a theory with arithmetic and pointer expressions, as well as call-by-value functions. In \cite{Yorsh05}, Yorsh and Musuvathi  presented a combination method for generating Craig interpolants for a class of first-order
theories. While Rybalchenko and Sofronie-Stokkermans \cite{Rybalcheno10} proposed an approach
by reducing the synthesis of Craig interpolants of  the combined theory of linear arithmetic and
uninterpreted function symbols to constraint solving.

However, in the literature, there is little work on how to synthesize non-linear interpolants, except
 that in \cite{Kupferschmid11}  Kupferschmid and Becker provided a method to construct
non-linear Craig Interpolant using iSAT, which is a variant of SMT solver based on interval arithmetic.

In this paper we investigate how to construct non-linear interpolants. The idea of our approach is as follows:
 Firstly, we reduce the problem of generating interpolants for arbitrary two polynomial formulas to that of
 generating interpolants for two semi-algebraic systems (SASs), which is a conjunction of a set of polynomial equations, inequations and inequalities (see the definition later). Then, according to \textbf{Positivstellensatz Theorem} of real algebraic geometry \cite{BCR98}, there exists a witness to indicate the considered two SASs do not have common
 real solutions if their conjunction is unsatisfiable. Parrilo in \cite{Parrilo00,Parrilo03} gave an approach for constructing the witness by applying semidefinite programming \cite{VB96}.   Our algorithm
invokes Parrilo's method as a subroutine. Our purpose is to construct Craig interpolants, so we need to
obtain a special witness. In general case, we cannot guarantee the existence of the special witness,
   which means that our approach is only sound, but not complete.
    However, we discuss that if the considered two SASs meet
 \emph{Archimedean condition}, (e.g., each variable occurring in the SASs is bounded, which is a reasonable assumption in practice), our approach is not only sound, but also complete.   We demonstrate our approach by
 some examples, in particular, we show how to apply the results to program verification by examples.

The complexity of our approach is polynomial in $u b \binom{n+b/2}{n}\binom{n+b}{n}$, where $u$ is the number of polynomial
constraints in the considered problem, $n$ is the number of variables, and $b$ is the highest degree of
polynomials and interpolants. So, the complexity of our approach is polynomial in $b$ for a given problem as in which $n$ and $u$ are fixed.

\subsubsection{Structure of the paper:} The rest of the paper is organized as follows. By a running example, we sketch our approach and show how to apply it to program verification in Section \ref{sec:over}.
Some necessary preliminaries are introduced in Section \ref{sec:found}. A
sound but incomplete algorithm  for synthesizing  non-linear interpolants in general case is described in Section \ref{sec:sound}.  Section \ref{sec:comp} provides a practical algorithm
 for systems only containing non-strict inequalities and satisfying Archimedean condition.
 Section \ref{sec:cor} focuses on the correctness and complexity analysis of our approach.
Our implementation and experimental results are briefly reported in Section \ref{sec:exp}. 
Section  \ref{sec:rel} describes more related work related to interpolant generation and
its application to program verification. Our summarizes the paper and discusses future work in Section \ref{sec:con}  .

\section {An Overview of Our Approach} \label{sec:over}
In this section, we sketch our approach  and show how to apply our results to program verification by an example.

\begin{center}
\begin{minipage}[b]{0.4\textwidth}
  \begin{lstlisting}[caption=example,label=example]
if (x*x+y*y<1)
  /*  initial values
while (x*x+y*y<3){
 x:=x*x+y-1;
 y:=y+x*y+1;
 if(x*x-2*y*y-4>0)
 /* unsafe area
   error();
 }
  \end{lstlisting}
     \hspace*{3cm}
\end{minipage}%
\begin{minipage}[b]{0.4\textwidth}
%\vspace*{-2cm}
%\begin{tabular}{ll}
  $g_1=1-x^2-y^2>0$ \\
   ~ \\
  $g_2=3-x^2-y^2>0$\\
  $f_1=x^2+y-1-x'=0$\\
%& $f_1= x^2+y-1-x'=0$\\
  $f_2=y+x'y+1-y'=0 $\\
  $g_3=x'^2-2y'^2-4>0$\\
   ~ \\
   ~\\[2mm]
   ~ \\
  ~ \\
 \end{minipage}\\
 \end{center}
Consider the program in Code~\ref{example} (left).
   This program tests the initial value of  $x$ and $y$ at line 1, afterwards executes the
   while loop with $x^2+y^2<3$ as the loop condition.
    The body of the while loop contains two assignments and
     an \textbf{if} statement in sequence. The property
  we wish to check is that  {\bf error()} procedure will never be executed.
   Suppose there is  an execution $1\rightarrow  3 \rightarrow 4 \rightarrow  5 \rightarrow 6 \rightarrow 8$.
   We can encode such an execution by the formulas as in Code~\ref{example} (right). Note that in these formulas
we use unprimed and primed versions of each variable to represent the values of the variable before and after
   updating respectively.
 Obviously,  the execution is infeasible iff the conjunction of these formulas  is
unsatisfiable. Let $\phi\triangleq g_1>0\wedge f_1=0 \wedge f_2=0\footnotemark$  and $\psi\triangleq g_3>0$. \footnotetext{As $g_1>0 \imply g_2>0$, we ignore $g_2>0$ in $\phi$.}
 %By Positivstellensatz Theorem,
To show $\phi \wedge \psi$ is unsatisfiable, we need to construct an {\em interpolant} $\theta$ for $\phi$ and $\psi$, i.e., $\phi \Rightarrow \theta$ and $\theta \Rightarrow \neg \psi$. If there exist $\delta_1,\delta_2,\delta_3, h_1,h_2$ such that $$g_1\delta_1+f_1h_1+f_2h_2+g_3\delta_2+\delta_3= -1,$$
  where $\delta_1,
  \delta_2, \delta_3 \in \RR[x, y, x', y']$ are sums of squares and $h_1, h_2\in \RR[x,y,x',y']$,
 then $\theta\triangleq g_3\delta_2+\frac{1}{2} \le  0$ is such an interpolant for $\phi$ and $\psi$.
  %because $\phi \Rightarrow \theta$ and $\theta \Rightarrow \neg \psi$.
 In this example, applying our tool \aisat, we obtain in 0.025 seconds that
{\scriptsize  \begin{eqnarray*}
  h_1&=& -290.17 -56.86y' +1109.95x' +37.59y -32.20yy' +386.77yx'  +203.88y^2 +107.91x^2,\\
  h_2&=& -65.71 +0.39y' +244.14x' +274.80y +69.33yy' -193.42yx' -88.18y^2
      -105.63x^2, \\
  \delta_1&=& 797.74 -31.38y' +466.12y'^2 +506.26x' +79.87x'y' +402.44x'^2 +104.43y  \\
     & & +41.09yy'  -70.14yx' +451.64y^2 +578.94x^2 \\
  \delta_2&=&436.45,\\
  \delta_3&=& 722.62 -91.59y' +407.17y'^2 +69.39x' +107.41x'y' +271.06x'^2 +14.23y +188.65yy' \\
          & &   +69.33yy'^2 -600.47yx'
		     -226.01yx'y' +142.62yx'^2 +325.78y^2 -156.69y^2y' +466.12y^2y'^2  \\
		  & &   +10.54y^2x'y' +595.87y^2x'^2 -11.26y^3 +41.09y^3y' +18.04y^3x' +451.64y^4 +722.52x^2 \\
		  & & -80.15x^2y'  +466.12x^2y'^2 -495.78x^2x' +79.87x^2x'y' +402.44x^2x'^2 +64.57x^2y \\
		  & & +241.99y^2x' +73.29x^2yy' -351.27x^2yx' +826.70x^2y^2 +471.03x^4.
  \end{eqnarray*} }
Note that $\delta_1$ can be represented as {\small $923.42(0.90+0.7y-0.1y'+0.43x')^2+252.84(0.42-0.28y+0.21y'-0.84x'
)^2+461.69(-0.1-0.83y+0.44y'+0.34x')^2+478(-0.06+0.48y+0.87y'+0.03x')^2
+578.94(x)^2$.} Similarly,  $\delta_2$ and $\delta_3$ can be represented as sums of squares also.

Moreover, using the approach in \cite{CXYZ07}, we can prove $\theta$ is an inductive invariant of the loop, therefore, {\bf error()} will never be executed.
\begin{remark}
Note that $\theta$ itself cannot be generated using
 quantifier elimination (QE for short) approach in \cite{CXYZ07}, as it contains more than thirty monomials, which means that there are more than thirty parameters at least in any predefined template which can be used to generate $\theta$. Handling so many parameters is far beyond the capability of all the existing tool based on QE.
 However, the problem whether $\theta$ is an inductive variant only contains $4$ variables, therefore it
 can be verified using QE. The detailed comparison between our approach reported in this paper and QE based technique can be seen in the related work.
 \end{remark}

\section{Theoretical Foundations} \label{sec:found}
In this section, for self-containedness, we briefly introduce some basic notions and mathematical theories,
  based on which our approach is developed.

\begin{definition}[Interpolants]\label{def:int}
 A theory  $\T$ has interpolant if for all formulae $\phi$ and $\psi $ in the signature of $\T$, if $\phi \models_{\T} \psi $,  then there exists a formula $\Theta$ that contains only symbols that $\phi$ and $\psi$ share such that $\phi \models_{\T} \Theta$ and $\Theta \models_{\T} \psi$.

 An interpolant $\Theta$ of $\phi$ and $\neg \psi$ is called \emph{inverse interpolant} of $\phi$ and $\psi$, i.e., $\phi \wedge \psi  \models_{\T} \perp$, $\phi \models_{\T} \Theta$ and $\Theta\wedge \psi \models_{\T} \perp$, where $\Theta$ contains only the symbols that $\phi$ and $\psi$ share.
  \end{definition}

Note that in practice, people like to abuse \emph{inverse interpolant} as \emph{interpolant}. Thus, as a convention, in the sequel, all interpolants are referred to \emph{inverse interpolant} if not otherwise stated.

Also, in what follows, we denote by $\xx$ a variable vector $(x_1,\cdots,x_n)$ in $\RR^n$, and by $\RR[\xx]$
  the polynomial ring with real coefficients in variables $\xx$.

 \subsection{Problem Description}
Here, we describe the problem we consider in this paper.
Let
		\begin{eqnarray} \label{eq:subpro}
				   \T_{1t}~=~
				   \bigwedge_{j=0}^{k_t}  f_{tj}(\xx)	\triangleright 0 & \hspace*{.5cm} \mbox{ and } \hspace*{.5cm} &
				   \T_{2l}~=~
				   \bigwedge_{j=0}^{s_l}  g_{lj}(\xx)	\triangleright' 0,
		 \end{eqnarray}
be two semi-algebraic systems (SASs), where  $f_{ij}$ and $g_{ij}$ are polynomials in $\RR[\xx]$, and $\triangleright_{ij}, \triangleright_{ij}'  \in \left\{ =,\neq,\ge \right\}$.
Clearly, any polynomial formula $\phi$ can be represented as a DNF, i.e. the disjunction of a several
SASs. Let
$\T_1=\bigvee_{t=1}^m  \T_{1t}, \T_2 = \bigvee_{l=1}^n  \T_{2l}$ be two polynomial formulas  and $\T_1\wedge\T_2\models \bot$, i.e., $\T_1$ and $\T_2$ do not share any real solutions. Then, the problem to be considered in this paper is how to find another polynomial formula $I$ such that $\T_1\models I$ and $ I \wedge \T_2 \models \bot$.

 It is easy to show that  if, for each $t$ and $l$, there is an interpolant $I_{tl}$ % which is a single polynomial inequality
for $\T_{1t}$ and $\T_{2l}$,
then $I=\bigvee_{t=1}^m\bigwedge_{l=1}^n I_{tl}$  is an interpolant of $\T_1$ and $\T_2$.
Thus, we only need to consider how to construct interpolants for two SASs of the form  (\ref{eq:subpro}) in the rest of this paper.

\subsection{Common variables}
%Concerning two SASs don't share the same set of variables, that is problematic. We did investigate this issue in the full version of this paper, but not given here also because of the space limit. Our solution is: Firstly, try to find the dependency between the shared variables and the respective local variables from the given program, and then replace the local variables with the expressions of the shared variables derived from the relation if it is possible.  Secondly, if the above method does not work, in particular, when programs contain loops or recursions. For these cases, we utilise the unwinding of the loops or recursions to find a relation between the respective local variables and the shared variables w.r.t. the number of unwindings. In particular, for the class of solvable programs (Rodr¨ªguez-Carbonell\&Kapur: Generating all polynomial invariants in simple loops), we can explicitly represent the respective local variables by the shared variables.

%\begin{remark}\label{remark:1}
	 In the above problem description, we assume $\T_1$ and $\T_2$ share a set of variables. But in practice, it is possible that they have different variables. Suppose $\mathcal{V}(\T_i)$ for the set of variables that indeed occur in $\T_i$, for $i=1,2$.
For each $v\in \mathcal{V}(\T_1)-\mathcal{V}(\T_2)$, if $v$ is a local variable introduced
 in the respective program, we always have an equation $v=h$ corresponding to the assignment to $v$ (possibly the composition of a sequence of assignments to $v$); otherwise, $v$ is a global variable, but
 only occurring in $\T_1$, for this case, we introduce an equation $v=v$ to $\T_2$; Symmetrically,
  each $v\in \mathcal{V}(\T_2)-\mathcal{V}(\T_1)$ can be coped with similarly.

  In the following, we show how to derive the equation $v=h$ from the given programs by case analysis.
  \begin{itemize}
  \item  If the given program has no recursion nor loops, we can find out the dependency between
  the variables in $\mathcal{V}({\T}_1)\cap \mathcal{V}(\T_2)$ and the variables in $\mathcal{V}(\T_j)-\mathcal{V}(\T_{3-j})$ according to the order of assignments
  in the program segment, where $j=1,2$. Clearly, we can always represent each variable in
    $\mathcal{V}(\T_j)-\mathcal{V}(\T_{3-j})$ by an expression of $\mathcal{V}({\T}_1)\cap \mathcal{V}(\T_2)$.
    Obviously, if all expressions in the program segment are polynomial, the resulted expressions are polynomial either.

   \item If the given program contains loops or recursion, it will become more complicated. So,
   we have to unwind the loop and  represent each variable in
    $\mathcal{V}(\T_j)-\mathcal{V}(\T_{3-j})$ by an expression of $\mathcal{V}({\T}_1)\cap \mathcal{V}(\T_2)$ and the number $i$ of the iterations of the loops or recursions. However, the resulted expressions may not be polynomial any more. But as proved in \cite{rck07}, if assignment mappings of the loops in the program segment are solvable, the resulted expressions are still polynomial.
    \end{itemize}

 \begin{definition}[Solvable mapping \cite{rck07}]
   Let $g\in \Bbb{Q}[\xx]^m$ be a polynomial mapping. $g$ is solvable if there exists a partition of
 $\xx$ into subvectors of variables, $\xx = \textbf{w}_1\cup\cdots \cup \textbf{w}_k$,
   $\textbf{w}_i \cap \textbf{w}_j =\emptyset$ if $i \neq j$, such that $\forall j : 1 \leq j \leq k$
we have
  \[g_{\textbf{w}_j}(\xx) = M_j\textbf{w}_j^T + \textbf{P}_j(\textbf{w}_1,\ldots,\textbf{w}_{j-1}),\]
where $M_j\in \Bbb{Q}^{|\textbf{w}_j|\times |\textbf{w}_j|}$ is a matrix and $\textbf{P}_j$
  is a vector of $|\textbf{w}_j|$ polynomials in the ring
 $\Bbb{Q}[\textbf{w}_1,\ldots, \textbf{w}_{j-1}]$. For $j = 1$, $\textbf{P}_1$ must be a constant vector, implying that $g_{\textbf{w}_1}$ is an affine
mapping.
\end{definition}

\begin{tabular}[!htb]{ p{0.45\textwidth} p{0.55\textwidth} }
  \begin{lstlisting} [label=sub3,caption=\ \ \ \ demo example]
assume(a+b>0);
 int x=a;int y=a;
 while(x+y<20){
    x=x*a;
    y=x+y*b;
 }
  \end{lstlisting}

  &
  \begin{lstlisting}
 int z=0;int w=0;
 while(w<z+50){
    a=a+1;b=b/2;
    w=w+a; z=w*b;
 }
 assert(x+y+z+w>0);
  \end{lstlisting}
  \tabularnewline
\end{tabular}

 For example, in the Code \ref{sub3} $a,b$ are {\it common variables}, $x,y,z,w$ are {\it local variables}.
 %There is an {\it assert} with common variables $a,b$.
Let $\T_1$ be related to the left  and $\T_2$ to the right of Code \ref{sub3}.  $\T_1$ uses an order $y\succ x \succ b \succ a$ on variables, and  $\T_2$  uses an order
  $z \succ w \succ b \succ a$ on variables. Obviously, in every iteration of the {\it  loop}s, variable with higher precedence can only be assigned with a polynomial of variables with lower precedence.
%  therefore, the two assignment mappings are solvable.
   In order to prove the {\it assert},  we unwind the first loop $i$ times, and
   obtain the values of $a,b,x,y$ are $a,b, a^{i+1}, \sum_{j=0}^{j=i+1}a^{i+1-j}b^j$, respectively.
   Similarly, unwind the second loop $j$ times, and obtain
   the values of $a,b,z,w$ are $a+j, \frac{b}{2^j}, ja+\frac{(j+1)j}{2}, (ja+\frac{(j+1)j}{2})\frac{b}{2^j}$, respectively. Thus, in the first loop, the local variables $x,y$ are represented
   by expressions of $a,b$, so are $z,w$ in the second loop.
   Using such replacements, we can obtain an interpolant $I_{ij}$ only concerning the common variables
    $a,b$ w.r.t. the $i$-th unwinding of the first loop and the $j$-th unwinding of the second loop. Whenever we can prove  that $I_{ij}$ is an invariant of Code \ref{sub3}, then the assert is guaranteed.
    This is a procedure  of {\it BMC}. %  to verify software.

  %the shared variables and the respective local variables from the given program, and then replace the local variables with the expressions of the shared variables derived from the relation if it is possible.

In what follows, we  use
  {\tt subvariable} to denote the above procedure to transform two SASs that may not share same variables to two SASs that share same variables.

\oomit{acome from a given  abstract error trace (which cannot be concretized) and they denote  past and the future segment of the trace,  respectively. We restrict ourselves to the case
	 where every   variable $v$ which only occurs in $\T_2$  there is  a
 equity  $v=h_v$ in $\T_2$ and  every variable  in  $h_v$  occurs earlier in trace than $v$. There is a topical order such  that we can use Algorithm   substitute every uncommon $v$ variable of $I$ by a topology  order with a polynomial $h_v$ obtain $I'$.    According Definition \ref{def:int}, $I'$ is an interpolant of $\T_1, \T_2$. This mean that we will not be able to find
 interpolants when $\T_1,\T_2$ can not satisfy above  restriction. This restriction is equivalent to {\it solvable mapping} in \cite{solve05}. }
%\end{remark}

\subsection{Real Algebraic Geometry}
In this subsection, we introduce some basic notions and results on real algebraic geometry, that will be used later. % and semidefinite programming to construct interpolant, we  list some used notation about them.

\begin{definition}[ideal] \label{def:ideal}
				  Let $\II$ be an ideal in $\RR[\xx]$, that is, $\II$ is an additive subgroup of $\RR[\xx]$ satisfying $fg\in \II$ whenever $f\in \II$ and $g\in \RR[\xx]$.
				  Given $h_1,\ldots,h_m\in \RR[\xx]$,
				  					$\left< h_1,\ldots,h_m\right> =  \left\{ \sum_{j=1}^m u_jh_j\ | \ u_1,\ldots,u_m\in \RR[\xx] \right\} $
				  denotes the ideal generated by $h_1,\ldots,h_m$.
				  %An ideal $\II$ is called a radical ideal iff $f^n\in \II \Rightarrow f\in \II$, $f\in \RR[\xx], n\in \NN^{+}$.
				\end{definition}

\begin{definition}[multiplicative monoid]
Given a polynomial set $P$, let $\textit{Mult}(P)$ be the \emph{multiplicative monoid} generated by $P$, i.e., the set of finite products of the elements of $P$ (including the empty product which is defined to be $1$).
 \end{definition}
				
\begin{definition}[Cone]
				   A \emph{cone} $\mathcal{C}$ of $\RR[\xx]$ is a subset of $\RR[\xx]$ satisfying the following conditions: (\textbf{i}) $p_1,p_2\in  \mathcal{C} \Rightarrow p_1+p_2 \in  \mathcal{C}$;
             (\textbf{ii})  $p_1,p_2\in  \mathcal{C} \Rightarrow p_1p_2\in  \mathcal{C}$;
			 (\textbf{iii}) $p \in \RR[\xx]\Rightarrow p^2 \in  \mathcal{C}$.
				 \end{definition}

  Given a set $P\subseteq \RR[\xx],$  let $ \mathcal{C}(P)$ be the smallest cone of $\RR[\xx]$ that contains $P$. It is easy to see that $ \mathcal{C}(\emptyset)$ corresponds to the polynomials that can be represented as a sum of squares, and is the smallest cone in $\RR[\xx]$, i.e., $\left\{\ \sum_{i=1}^sp_i^2 \mid p_1,\ldots, p_s \in \RR[\xx] \right\}$, denoted by $\mathbf{SOS}$.
For a finite set $P\subseteq \RR[\xx]$, $\mathcal{C}(P)$ can be represented as:
				   $$  \mathcal{C}(P)=\{q+\sum_{i=1}^r q_ip_i\ |\ q,q_1,\ldots,q_r\in \mathcal{C}(\emptyset), p_1,\ldots,p_r\in \textit{Mult}(P)\}.$$

 \emph{Positivstellensatz Theorem}, due to Stengle \cite{BCR98}, is an important theorem in real algebraic geometry. It states that, for a given SAS, either
				 the system has a solution in $\RR^n$, or there exists a certain polynomial identity which bears witness to indicate that the system has no solutions.
 %For concreteness it is stated here for $\RR$,
  %				 instead of the general case of real closed fields.

\begin{theorem}[Positivestellensatz Theorem, \cite{BCR98}] \label{the:2}
Let $(f_j)_{j=1}^{s}, \ (g_k)_{k=1}^{t}, (h_l)_{l=1}^u$ be finite families of polynomials in $\RR[\xx]$. Denote by $\mathcal{C}$ the cone
				   generated by $(f_j)_{j=1}^s$, $\textit{Mult}$ the multiplicative monoid generated by $ (g_k)_{k=1}^t,$ and $\II$ the ideal generated by $(h_l)_{l=1}^u$. Then the following two statements are equivalent:
		   \begin{enumerate}
					 \item the SAS
						$ \left\{
						   \begin{array}{c}
							 \  f_1(\xx)\geq 0,\quad \cdots, \quad f_s(\xx)\geq 0, \\
							 \ g_1(\xx)\neq 0, \quad \cdots, \quad g_t(\xx)\neq 0,  \\
							 \ h_1(\xx)=0, \quad \cdots, \quad  h_u(\xx)=0\\
						   \end{array} \right.$
					   %\end{displaymath}
					   has no real solutions;
					 \item   there exist $f\in  \mathcal{C}$, $g\in \textit{Mult}$, $h \in \II $ such that $f+g^2+h\equiv 0$.
 \end{enumerate}
\end{theorem}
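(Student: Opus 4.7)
For (\textbf{2}) $\imply$ (\textbf{1}), I would argue by direct substitution. Suppose for contradiction $\xx^{*}\in\RR^{n}$ were a common solution of the SAS. Then $f(\xx^{*})\ge 0$ since $f\in\mathcal{C}$, $g(\xx^{*})^{2}>0$ since each $g_{k}(\xx^{*})\ne 0$ and hence the monoid element $g(\xx^{*})\ne 0$, and $h(\xx^{*})=0$ since $h\in\II$. Adding these three contributions contradicts the identity $f+g^{2}+h\equiv 0$.

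For the substantive direction (\textbf{1}) $\imply$ (\textbf{2}), my plan is to argue by contraposition, combining Artin--Schreier theory with the Tarski--Seidenberg transfer principle. First, assume no identity $f+g^{2}+h\equiv 0$ exists; equivalently, $-g^{2}\notin\mathcal{C}+\II$ for every $g\in\textit{Mult}$. I would then invoke Zorn's lemma on the family of cones $\mathcal{C}'\supseteq\mathcal{C}$ in $\RR[\xx]$ satisfying $-g^{2}\notin\mathcal{C}'+\II$ for all $g\in\textit{Mult}$, and extract a maximal element $\mathcal{C}^{*}$. A lemma using this maximality together with the polarisation identity $pq=\tfrac{1}{4}\bigl((p+q)^{2}-(p-q)^{2}\bigr)$ then shows that $\mathcal{C}^{*}$ is a \emph{prime cone}: $\RR[\xx]=\mathcal{C}^{*}\cup(-\mathcal{C}^{*})$, and $\mathfrak{p}:=\mathcal{C}^{*}\cap(-\mathcal{C}^{*})$ is a prime ideal of $\RR[\xx]$ containing $\II$.

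Given such $\mathcal{C}^{*}$, the quotient $D:=\RR[\xx]/\mathfrak{p}$ is an integral domain whose fraction field $K$ inherits a total order from $\mathcal{C}^{*}/\mathfrak{p}$. Let $R$ be its real closure, whose existence and uniqueness up to $K$-isomorphism are classical. Writing $\bar{\xx}\in R^{n}$ for the image of the coordinate functions, the construction gives $f_{j}(\bar{\xx})\ge 0$, $h_{l}(\bar{\xx})=0$, and $g_{k}(\bar{\xx})\ne 0$ in $R$ (the last because $g_{k}\in\mathfrak{p}$ would force $-g_{k}^{2}\in\mathcal{C}^{*}$, violating the avoidance condition). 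Thus the SAS has a solution over the real closed field $R$. Finally, Tarski--Seidenberg transfers the existential sentence expressing the solvability of the SAS from $R$ down to $\RR$, contradicting (\textbf{1}).

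The main obstacle I anticipate is the prime-cone construction: verifying that a cone maximal with respect to avoiding \emph{every} $-g^{2}$ with $g\in\textit{Mult}$ (rather than merely $-1$) is automatically prime requires bookkeeping specific to the multiplicative monoid, and one must check that the resulting order on $D$ extends consistently to $K$. By contrast, the Tarski--Seidenberg transfer principle I would treat as a black box, citing the standard quantifier-elimination theorem for real closed fields.
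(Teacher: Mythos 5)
The paper does not prove Theorem~\ref{the:2}; it is quoted as a classical result of Stengle from \cite{BCR98}, and the text immediately afterward notes only that Stengle's proof is non-constructive and that the paper instead relies on Parrilo's SDP-based search for bounded-degree certificates (Theorem~\ref{the:3}). So there is no in-paper proof to compare your proposal against.

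Judged on its own terms, your sketch is correct and is, in outline, the standard proof of the Positivstellensatz as presented in \cite{BCR98}. The easy direction is fine as written: if $\xx^{*}$ satisfied the system, then $f(\xx^{*})\ge 0$ because $\mathcal{C}$ consists of SOS-weighted products of the $f_{j}$'s, $g(\xx^{*})^{2}>0$ because a finite product of nonzero reals is nonzero, and $h(\xx^{*})=0$, so the identity $f+g^{2}+h\equiv 0$ cannot hold. For the converse, your route --- contraposition, Zorn's lemma to extract a maximal cone avoiding $-g^{2}$ for every $g\in\textit{Mult}$, the polarisation identity to show the support $\mathfrak{p}=\mathcal{C}^{*}\cap(-\mathcal{C}^{*})$ is a prime ideal and that $\RR[\xx]=\mathcal{C}^{*}\cup(-\mathcal{C}^{*})$, passage to the ordered fraction field of $\RR[\xx]/\mathfrak{p}$ and its real closure $R$, and Tarski--Seidenberg transfer of the resulting $R$-point back to $\RR$ --- is exactly the classical argument. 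One bookkeeping point worth tightening: it is cleaner to run the Zorn argument over cones that already contain $\II\cup(-\II)$ (equivalently, to pass to the quotient $\RR[\xx]/\II$ first), so that the avoidance condition is simply $-g^{2}\notin\mathcal{C}'$ and the verification that $g_{k}\notin\mathfrak{p}$ reads off directly from $-g_{k}^{2}\in\mathcal{C}^{*}$ being forbidden; otherwise you must separately argue that $\II\subseteq\mathfrak{p}$. You flag the prime-cone step as the main obstacle, which is accurate, and treating Tarski transfer as a black box is standard and appropriate. In short, your proof is a faithful reconstruction of the argument in the cited source, not an alternative to anything in this paper.
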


 \subsection{Semidefinite Programming}

 In \cite{BCR98}, Stengle did not provide a constructive proof to Theorem \ref{the:2}.
   However, Parrilo in \cite{Parrilo00,Parrilo03} provided a constructive way to obtain the witness,
    which is based on semidefinite programming. Parrilo's result will be the starting point of our
     method, so we briefly review  semidefinite programming below.
 We use  $Sym_n$ to denote the set of  $n\times n$ real symmetric matrices, and $deg(f)$  the highest total degree  of $f$ for a given polynomial $f$ in the sequel.

 \begin{definition}[Positive semidefinite matrices] A matrix $M\in Sym_n$ is called \emph{positive semidefinite}, denoted by  $M\succeq 0$, if
   $\xx^TM\xx\ge 0$  for all $\xx\in \RR^n$.
 \end{definition}

\begin{definition}[Inner product]\label{def:inner}
The {\em inner product} of two matrices $A=(a_{ij}),B=(b_{ij})\in \RR^{n\times n}$, denoted by
  $\left<A,B\right>$, is defined by
 $Tr(A^TB)=\sum _{i,j=1}^na_{ij}b_{ij}$.
\end{definition}

\begin{definition}[Semidefinite programming (SDP)]\label{def:sdp}
The standard (primal) and dual forms of a {\sdp} are respectively given in the following:
 \begin{eqnarray}
	p^* &=& \inf_{X\in Sym_n}\left<C,X\right> \mbox{ s.t. } X\succeq 0,\ \left<A_j,X\right>=b_j\  (j=1,\ldots,m)
	\label{eq:primal} \\
d^* &= & \sup_{ y\in \RR^m} \mathbf{b}^T\yy \  \mbox{ s.t.} \ \sum_{j=1}^m y_jA_j +S=C,\  S \succeq 0,
	\label{eq:dual}
  \end{eqnarray}
   where  $C,A_1,\ldots,A_m,S\in \textit{Sym}_n$ and $\mathbf{b}\in \RR^m$.
  \end{definition}

There are many  efficient algorithms to solve
 {\sdp} such as interior-point method. We present a basic path-following algorithm to solve (\ref{eq:primal}) in the following.

\begin{definition}[Interior point for \sdp]
  \begin{eqnarray*}
	\textit{intF}_p & = & \left\{ X: \left<A_i,X\right>=b_i\ (i=1,\ldots,m),\ X\succ 0 \right\}, \\
	\textit{intF}_d & = & \left\{ (\mathbf{y},S): S=C-\sum_{i=1}^mA_iy_i\succ 0 \right\}, \\
   \textit{intF} & = & \textit{intF}_p\times \textit{intF}_d.
   \end{eqnarray*}
\end{definition}

Obviously, $\left<C,X \right> -\mathbf{b}^T\yy=\left<X,S\right> >0 $ for all $(X,\yy,S)\in \textit{intF} $. Especially, we have $ d^*\le p^*$. So the soul of interior-point method to compute $p^*$
  is to reduce $\left<X,S\right>$ incessantly and meanwhile guarantee $(X,\yy,S)\in \textit{intF}$.
  \begin{algorithm}[!htb]
  \SetKwData{Left}{left}\SetKwData{This}{this}\SetKwData{Up}{up}
  \SetKwFunction{Union}{Union}\SetKwFunction{FindCompress}{FindCompress}
  \SetKwInOut{Input}{input}\SetKwInOut{Output}{output}
  \Input{ $C$, $A_j, b_j\ (j=1,\dots,m)$ as in (\ref{eq:primal}) and a threshold $c$ }
  \Output{ $p^* $ }
  \SetAlgoLined
  \BlankLine
  Given a $(X,\yy,S)\in \textit{intF}$ and $XS=\mu I$\;
  \tcc{ $\mu$ is a positive constant and $I$ is the identity matrix. }
  \While{$\mu>c$  }{
%	\tcc{$C_{\mu}$ is a threshold value.}
	$\mu=\gamma\mu$\;
	\tcc{$\gamma$ is a fixed positive constant less than one}
	use  {\bf Newton iteration} to solve $(X,\yy,S) \in \textit{intF}$ with $XS=\mu I$\;
  }
 \caption{ {\tt Interior\_Point\_Method}\label{alg:inter}
 }
\end{algorithm}

\subsection{Constructive Proof of Theorem \ref{the:2} Using \sdp}

\begin{algorithm}[!htb]
  \SetKwData{Left}{left}\SetKwData{This}{this}\SetKwData{Up}{up}
  \SetKwFunction{Union}{Union}\SetKwFunction{FindCompress}{FindCompress}
  \SetKwInOut{Input}{input}\SetKwInOut{Output}{output}
  \Input{ $\left\{  f_1,\ldots,f_n\right\}, g ,\left\{ h_1,\ldots,h_u \right\}, b $ }
  \Output{ either $\left\{  p_0,\ldots,p_n\right\}$ and $\left\{ q_1,\ldots,q_u \right\}$ such that $1+p_0+p_1f_1+\cdots+p_nf_n+g +q_1 h_1+\cdots+q_uh_u\equiv 0$,  or NULL }
  \SetAlgoLined
  \BlankLine

  Let  $q_{11},q_{12},q_{21},q_{22}, \ldots,q_{u1},q_{u2}\in \mathbf{ SOS}$ with $\deg(q_{i1})\le b$ and  $\deg(q_{i2})\le b $ be undetermined $\mathbf{ SOS}$
  polynomials\;
  Let  $p_1,\ldots,p_n\in \mathbf{SOS} $  with $deg(p_i)\le b$ be undetermined $\mathbf{ SOS }$ polynomials\;
  Let  $f=1+p_0+p_1f_1+\cdots+p_nf_n+g +(q_{11}-q_{12}) h_1+\cdots+(q_{u1}-q_{u2})h_u$\;

  \For{every monomial $m\in f $}{
Let	$\left<Q_m,Q \right>={\tt coeff}(f,m)$\;
  \tcc{ Applying Lemma \ref{lem:cons1} }
	\tcc{ ${\tt coeff}(f,m)$  the coefficient of monomial $m$ in polynomial $f$ }

	\tcc{$Z$ is a monomial vector  that contains all monomials with coefficient 1 and  degree less than or equal to $b/2$ }
	\tcc{$p_0=Z^TQ_0Z,p_1=Z^TQ_1Z,\dots,p_n=Z^TQ_nZ $ }
	\tcc{$q_{i1}=Z^TQ_{i1}Z,q_{i2}= Z^TQ_{i2}Z,i=1,\dots,u$ }
	\tcc{$Q=diag(1,Q_0, Q_1,\dots,Q_n,1,Q_{11},Q_{12},\dots,Q_{u1},Q_{u2} )  $   }

  }
 Applying  {\sdp} software \csdp\ to  solve whether there exists a semi-definite
	 symmetric matrix
 $ Q \ s.t.\  \left<Q_m,Q\right>=0\ $ for every monomial $m\in f$

 \eIf{  the return of \csdp\  is feasible }{
	\tcc{$q_i=q_{i1}-q_{i2} $ }
			\KwRet   $\left\{  p_0,\ldots,p_n\right\},\left\{ q_1,\ldots,q_u \right\}$
		  }{
			\KwRet NULL
		  }
  \caption{\cand\label{alg:sos} }
\end{algorithm}

Given a polynomial $f(\xx)$ of degree no more than $2d$, $f$ can be rewritten as $f=Z^TQZ$ where $Z$ is a vector consists of all monomials of degrees no more than $d$, e.g., $Z=\left[ 1,x_1,x_2,\dots,x_n,x_1x_2,x_2x_3,\dots,x_n^d \right]^T$, and
$ Q= \begin{pmatrix}		
	 a_{1}                 & \frac{ a_{x_1} }{2}       &  \cdots    &     \frac{a_{x_n} }{2}    \\
	 \frac{a_{x_1}}{2}     & a_{x_1^2}                 &  \cdots    &     \frac{a_{x_1 x_n} }{2}  \\
	  \vdots               & \vdots                    &  \ddots    &      \vdots \\
	  \frac{a_{x_n}}{2}    & \frac{ a_{x_1x_n}}{2}     &  \cdots    &      a_{x_n^d}
	   \end{pmatrix}$
is a  symmetric matrix.
	 Note that here $Q$  is not unique in general. Moreover, $f\in \mathcal{C}(\emptyset)$ iff there is a positive semidefinite constant matrix $Q$ such that
$f(\xx)=Z^TQZ.$  The following lemma is an obvious fact on how to use the above notations to express the polynomial multiplication.
 \begin{lemma} \label{lem:cons1}
  For given polynomials $f_1,\dots,f_n, g_1,\dots,g_n$, assume
   $\sum_{i=1}^n f_ig_i=\sum_{i=1}^{s}c_im_i$,  where $c_i\in \RR$ and $m_i$s are monomials.
Suppose $g_i=Z^TQ_{2i}Z$ and $ Q_2=diag(Q_{21},\dots,Q_{2n})$.
 Then there exist  symmetric matrices $Q_{11},\dots,Q_{1s} $ such that $c_i=\left<Q_{1i},Q_{2}\right>$, i.e.,
 $\sum_{i=1}^nf_ig_i=\sum _{i=1}^s \left<Q_{1i},Q_{2}\right>m_i$, in which $Q_{1i}$ can be constructed from the coefficients of  $f_1,\dots,f_n$.
 \end{lemma}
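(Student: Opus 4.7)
\medskip

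\noindent\textbf{Proof plan for Lemma \ref{lem:cons1}.}  The plan is to read off the coefficient of each monomial $m_\ell$ in $\sum_i f_i g_i$ as an explicit linear functional of the entries of the block matrix $Q_2$, and then to realize that linear functional as an inner product with a suitable symmetric matrix $Q_{1\ell}$. The coefficients of the $f_i$'s will enter only through the construction of $Q_{1\ell}$, as required by the last clause of the statement.

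First I would write out $g_i$ entry-wise: since $Z = (z_1,\dots,z_N)^T$ is the fixed monomial vector and $Q_{2i} = ((Q_{2i})_{jk})$, we have $g_i = Z^T Q_{2i} Z = \sum_{j,k}(Q_{2i})_{jk}\, z_j z_k$. Multiplying by $f_i$ and expanding each polynomial $f_i z_j z_k$ in the monomial basis, there exist real scalars $\alpha^{(i)}_{jk\ell}$, depending only on the coefficients of $f_i$, such that $f_i z_j z_k = \sum_{\ell} \alpha^{(i)}_{jk\ell}\, m_\ell$. Summing over $i$ and comparing with $\sum_i f_i g_i = \sum_\ell c_\ell m_\ell$ gives the identity
\begin{equation*}
  c_\ell \;=\; \sum_{i=1}^{n}\sum_{j,k} \alpha^{(i)}_{jk\ell}\,(Q_{2i})_{jk},
\end{equation*}
so $c_\ell$ is visibly linear in the entries of $Q_2$.

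Next I would package this linear functional as an inner product. For each $i$, define an $N\times N$ matrix $B^{(i)}_\ell$ by the symmetrization $(B^{(i)}_\ell)_{jk} = \tfrac{1}{2}\bigl(\alpha^{(i)}_{jk\ell} + \alpha^{(i)}_{kj\ell}\bigr)$; by construction $B^{(i)}_\ell$ is symmetric, and since $Q_{2i}$ is symmetric as well, replacing $\alpha^{(i)}_{jk\ell}$ by its symmetrization does not change the sum above. Now assemble the block-diagonal matrix $Q_{1\ell} := \mathrm{diag}(B^{(1)}_\ell,\dots,B^{(n)}_\ell)$ with the same block partition as $Q_2 = \mathrm{diag}(Q_{21},\dots,Q_{2n})$. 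Then
\begin{equation*}
  \langle Q_{1\ell}, Q_2 \rangle \;=\; \sum_{i=1}^{n}\langle B^{(i)}_\ell, Q_{2i}\rangle \;=\; \sum_{i=1}^{n}\sum_{j,k}(B^{(i)}_\ell)_{jk}(Q_{2i})_{jk} \;=\; c_\ell,
\end{equation*}
which is exactly the claim. Note $Q_{1\ell}$ is symmetric (block-diagonal with symmetric blocks) and its entries are polynomial-linear functions of the coefficients of $f_1,\dots,f_n$ only, matching the ``constructed from the coefficients'' clause.

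The argument contains no real obstacle; the only care needed is in the symmetrization step, since without it the matrix $Q_{1\ell}$ read off directly from the $\alpha^{(i)}_{jk\ell}$'s need not be symmetric. Using that $Q_{2i}^T = Q_{2i}$ allows one to replace any matrix $A$ by $(A+A^T)/2$ in the inner product without affecting its value, which is what lets us insist the $Q_{1\ell}$'s be symmetric as stated.
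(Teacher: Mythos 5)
Your proof is correct and fills in exactly the computation that the paper treats as ``an obvious fact,'' illustrating it only by the worked example immediately after the lemma; your block-diagonal assembly of $Q_{1\ell}$ and the symmetrization using $Q_{2i}^T=Q_{2i}$ are precisely the details that example leaves implicit.
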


 \begin{example}
 Let $f=a_{20}x_1^2+a_{11}x_1x_2+a_{02}x_2^2$ and $g=b_{00}+b_{10}x_1+b_{01}x_2$. Then,
 $fg$=$\left<Q_{11},Q_2\right> x_1^2 + \left< Q_{12}, Q_2\right> x_1x_2+ \left<Q_{13},Q_2 \right> x_2^2 + \left<Q_{14}, Q_2\right>x_1x_2^2$
 $\left<Q_{15}, Q_2\right> x_1^2x_2 + \left<Q_{16}, Q_2\right> x_2^3 + \left<Q_{17},Q_2\right>x_1^3$
, where
 {\small \[\begin{array}{llll}
   Q_2=   \begin{pmatrix}
   b_{00}                   & \frac{b_{10}}{2} & \frac{b_{01}}{2}  \\
	 \frac{ b_{10}}{2}   & 0                & 0              \\
	 \frac{b_{01} }{2}   & 0                & 0
   \end{pmatrix}, &
 Q_{11}=\begin{pmatrix}
      a_{20}     & 0                & 0              \\
	   0         & 0                & 0              \\
	   0         & 0                & 0
   \end{pmatrix}£¬ &
   Q_{12}=\begin{pmatrix}
      a_{11}     & 0                & 0              \\
	   0         & 0                & 0              \\
	   0         & 0                & 0
   \end{pmatrix}, &
   Q_{13}=\begin{pmatrix}
      a_{02}     & 0                & 0              \\
	   0         & 0                & 0              \\
	   0         & 0                & 0
   \end{pmatrix}, \\
   Q_{14}=\begin{pmatrix}
   0                        &    \frac{a_{02}}{2}  & \frac{a_{11}}{2}  \\
   \frac{a_{02}}{2}         & 0                    & 0              \\
   \frac{a_{11}}{2}         & 0                    & 0
   \end{pmatrix}, &
   Q_{15}=\begin{pmatrix}
   0                        &   \frac{a_{11}}{2}   & \frac{a_{20}}{2}  \\
   \frac{a_{11}}{2}         & 0                    & 0              \\
   \frac{a_{20}}{2}         & 0                    & 0
   \end{pmatrix}, &
    Q_{16}=\begin{pmatrix}
   0                        & 0                    & \frac{a_{02}}{2}  \\
   0                        & 0                    & 0              \\
   \frac{a_{02}}{2}         & 0                    & 0
   \end{pmatrix}£¬ &
Q_{17}=\begin{pmatrix}
  0                        &     \frac{a_{02}}{2}      & 0  \\
  \frac{a_{02}}{2}         & 0                         & 0              \\
   0                       & 0                         & 0
   \end{pmatrix}.
 \end{array}\] }
 \end{example}

Back to Theorem \ref{the:2}. We show how to find $f\in  \mathcal{C}$, $g\in \textit{Mult}$, $h \in \II $ such that $f+g^2+h\equiv 0$ via {\sdp} solving.
First, since $f\in {\mathcal C}$, $f$ can be written as
a sum of the products of some known polynomials and  some  unknown SOSs. Second, $h\in \mathcal{I}(\{h_1,\dots,h_u\})$ is equivalent to $
 h=h_1p_1+\cdots+h_u p_u$, which is further equivalent to $ h= h_1(q_{11}-q_{12})+\dots+ h_u(q_{u1}-q_{u2})$, where $p_i, q_{ij}\in \RR[\xx]$ and $q_{ij} \in \mathbf{SOS}$\footnotemark.
 \footnotetext{For example, let $q_{i1}=(\frac{1}{4}p_i +1)^2, q_{i2}=(\frac{1}{4}p_i -1)^2$.}
 Third, fix an integer $d>0$, let
 $g=(\Pi_{i=1}^tg_i )^{d}$,  and then $f+g^2+h\equiv 0$  can be written as
 $\sum_{i=1}^lf'_i \delta_i$, where $l$ is a constant integer, $f'_i\in
 \RR[\xx]$ are known polynomials and $ \delta_i\in \mathbf{SOS}$ are
 undermined $\mathbf{SOS}$ polynomials.
 Therefore,  Theorem \ref{the:2}  is reduced to fixing a sufficiently
 large integer $d$ and finding   undetermined  $\mathbf{ SOS }$
 polynomials $\delta_i$ occurring in $f,h$  with degrees less than or equal to
 $\deg(g^2)$, which satisfies  $f+g^2+h\equiv 0$. Based on Lemma
 \ref{lem:cons1}, this is a {\sdp} problem of form (\ref{eq:primal}).
 The constraints of the {\sdp} are of the form  $\left<A_j,X\right>=0$,
    where $A_j $ and $ X$ correspond to $Q_{1j}$  and  $Q_2$   in Lemma \ref{lem:cons1}, respectively.
And  $Q_2$ is a block diag  matrix  whose blocks  correspond to the
undetermined  $\mathbf{ SOS }$ polynomials in the above discussion. That is,

\begin{theorem}[\cite{Parrilo00}]\label{the:3}
Consider a system of polynomial equalities and inequalities of the form in Theorem \ref{the:2}. Then the search for bounded degree Positivstellensatz refutations can be done using semidefinite programming. If the degree bound is chosen to be large enough, then the {\sdp}s will be feasible, and the certificates can be obtained from its solution.
\end{theorem}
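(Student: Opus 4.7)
The plan is two-stage: first, for a fixed degree bound $b$ and fixed exponents $(d_1,\ldots,d_t)$ for the monoid part, convert the search for a Positivstellensatz certificate $f + g^2 + h \equiv 0$ (as in Theorem~\ref{the:2}) into a single feasibility SDP of the form (\ref{eq:primal}); second, invoke Theorem~\ref{the:2} itself to argue that once $b$ and the $d_k$ are taken large enough the resulting SDP must be feasible, and that any solution decodes back into an explicit refutation.

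For the reduction, I would parameterize each ingredient with unknown symmetric matrices. Since $\CC$ is generated by the $f_j$'s, any element of $\CC$ of degree at most $2b$ can be written as $\sigma_\emptyset + \sum_\alpha \sigma_\alpha \prod_{j\in\alpha} f_j$ with $\alpha$ ranging over subsets of $\{1,\ldots,s\}$ and each $\sigma_\alpha$ an undetermined SOS of degree at most $b$. I would fix $g = \prod_{k=1}^t g_k^{d_k}$, making $g^2$ a known polynomial, and write $h = \sum_{l=1}^u (q_{l1}-q_{l2})h_l$ with each $q_{lj}$ an undetermined SOS of degree at most $b$, using the polarization trick of the footnote above Theorem~\ref{the:3} to represent arbitrary polynomial multipliers as differences of SOS. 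Each such SOS is then $Z^T Q Z$ for a PSD matrix $Q$, where $Z$ is the vector of monomials of degree at most $b/2$. Substituting these into $f + g^2 + h \equiv 0$, expanding, and applying Lemma~\ref{lem:cons1} coefficient-by-coefficient yields one linear equation $\langle A_m, X\rangle = c_m$ per monomial $m$ on the block-diagonal matrix $X = \mathrm{diag}(Q_{\sigma_\alpha}, Q_{q_{lj}})$, with each $A_m$ built from the known coefficients of $f_j, g_k, h_l$ and $c_m$ the coefficient of $m$ in $-g^2$. Together with $X \succeq 0$ this is exactly a feasibility SDP in the form (\ref{eq:primal}) with null objective, which is the algorithmic content of Algorithm~\ref{alg:sos}.

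For completeness, assume the SAS has no real solutions; Theorem~\ref{the:2} then supplies some certificate $(f^*, g^*, h^*)$. Let $b^*$ bound the degrees of all SOS multipliers appearing in the canonical representations of $f^*$ and of the polynomial coefficients in $h^*$, and let $d_k^*$ be the multiplicity of $g_k$ in $g^*$. Taking $b \ge b^*$ and $d_k \ge d_k^*$, and absorbing the extra factors $g_k^{2(d_k - d_k^*)}$ (themselves squares) into the $\sigma_\alpha$'s, embeds $(f^*, g^*, h^*)$ into the feasible set of the parameterized SDP, which is therefore feasible; a solution returned by Algorithm~\ref{alg:inter} then decodes back to an exact refutation via the $Z^T Q Z$ factorization. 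The main obstacle is effectiveness: Theorem~\ref{the:2} is non-constructive, so no a priori bound on $b^*$ or $d_k^*$ is available and one must iteratively enlarge the degree parameters until feasibility is detected. Secondary difficulties are the $2^s$ blowup from enumerating subsets $\alpha$ in the cone representation, which forces practical implementations such as Algorithm~\ref{alg:sos} to restrict to small $|\alpha|$ at the cost of a potentially larger minimum $b$, together with the numerical-to-exact rounding required to turn an approximate PSD matrix returned by an interior-point solver into a verifiable algebraic identity.
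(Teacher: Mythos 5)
Your proposal matches the paper's own (sketched) argument essentially step for step: parameterize the cone element by block-diagonal SOS Gram matrices over products $\prod_{j\in\alpha}f_j$, fix the monoid element as a power of the $g_k$'s, encode the ideal element via differences of SOS, reduce coefficient matching to linear constraints $\langle A_m,X\rangle = c_m$ via Lemma~\ref{lem:cons1}, and appeal to Theorem~\ref{the:2} (Positivstellensatz) to ensure feasibility once the degree bound is large enough. The only very minor imprecision is the phrase ``absorbing the extra factors $g_k^{2(d_k-d_k^*)}$ into the $\sigma_\alpha$'s'': one actually multiplies the \emph{entire} identity $f^*+(g^*)^2+h^*\equiv 0$ by $\rho^2=\prod_k g_k^{2(d_k-d_k^*)}$, which rescales both $f^*$ and $h^*$ and raises their degrees by $2\deg\rho$, so $b^*$ should be chosen \emph{after} fixing the $d_k$; this does not affect the substance of the argument.
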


Algorithm~\ref{alg:sos} is an implementation of Theorem~\ref{the:3} and we will invoke  Algorithm \ref{alg:sos} as a subroutine later. Note that Algorithm~\ref{alg:sos} is a little different from
the original one in \cite{Parrilo03}, as here we require that $f$ has $1$  as a summand for our specific purpose.

\section{Synthesizing Non-linear Interpolants in General Case}\label{sec:sound}
As discussed before, we only need to consider how to
synthesize interpolants for the following two specific SASs
\begin{eqnarray} \label{SAS}
\T_1=\left \{
    \begin{array}{l}
      f_1(\xx)\geq 0,\ldots, f_{s_1}(\xx)\geq 0, \\
      g_1(\xx)\neq 0, \ldots, g_{t_1}(\xx)\neq 0, \\
	  h_1(\xx)=0,\ldots, h_{u_1}(\xx)=0 \\
    \end{array}
   \right.
   & \hspace*{.1cm} &
\T_2=
\left \{
     \begin{array}{l}
      f_{s_1+1}(\xx)\geq 0, \ldots, f_{s}(\xx)\geq 0, \\
      g_{t_1+1}(\xx)\neq 0,  \ldots, g_{t}(\xx)\neq 0, \\
	  h_{u_1+l}(\xx)=0, \ldots, h_{u}(\xx)=0 \\
    \end{array}
   \right.
\end{eqnarray}
where $\T_1$ and  $\T_2$ do not share any real solutions.

 By Theorems \ref{the:2}\&\ref{the:3}, there exist $f\in  \mathcal{C}(\{f_1,\ldots,f_s\})$,
$g\in \textit{Mult}(\{g_1,\ldots,g_t\})$ and $h\in \II(\{h_1,\ldots,h_u\})$ such that $f+g^2+h\equiv 0$,
  where
\begin{eqnarray*}
 g &= & \Pi_{i=1}^tg_i^{2m}, \\
 h &= & q_1h_1+\cdots+q_{u_1}h_{u_1}+\cdots+q_uh_u, \\
 f & = & p_0+p_1f_1+\cdots +p_sf_s+p_{12}f_1f_2+\cdots+p_{1\ldots s}f_1\ldots f_s.
 \end{eqnarray*}
 in which $q_i$ and $p_i$ are in  $\mathbf{SOS}$.

%p_0+p_1f_1+\cdots +p_{s_1}f_{s_1}+\cdots+p_sf_s+p_{12}f_1f_2+\cdots+p_{1\ldots s}f_1\ldots f_s,$$

 If $f$ can be represented by three parts: the first part is an $\mathbf{ SOS }$ polynomial that is greater than 0, the second part is from $\mathcal{C}(\{f_1,\ldots,f_{s_1}\})$, and the last part is from
  $\mathcal{C}(\{f_{s_1+1},\ldots,f_{s}\})$, i.e., $f=p_0+\sum_{v \subseteq  \{1,\ldots ,s_1\}} p_{v}(\Pi_{i\in v}f_i)  +\sum_{v\subseteq \{s_1+1,\ldots,s\}} p_{v}(\Pi_{i\in v}f_i)$, where $\forall \xx\!\in \!\RR^n.p_0(\xx) >0$ and $p_v \in \mathbf{SOS}$. Then let
\begin{eqnarray*} \vspace*{-.5cm}
 f_{ \T_1} & = & \sum_{v \subseteq  {1,\ldots ,s_1}} p_{v}\Pi_{i\in v}f_i, ~~~~~~~~~
 h_{\T_1} ~ = ~ q_1h_1+\cdots+q_{u_1}h_{u_1}, \\
 f_{\T_2} & = & \sum_{v \subseteq  {s_1+1,\ldots ,s}} p_{v}\Pi_{i\in v}f_i, ~~~ ~~~~
  h_{\T_2} ~ = ~ h -h_{\T_1},  \\
  q & = & f_{\T_1}+g^2+h_{\T_1}+\frac{q_0}{2} =-(f_{\T_2}+h_{\T_2})-\frac{q_0}{2}.
\end{eqnarray*}
Obviously, we
 have $\forall \xx \!\in \! \T_1. q(\xx)> 0$ and $\forall \xx \! \in \! \T_2. q(\xx)< 0$.
Thus, let $I=q(\xx)>0$. We have $\T_1 \models I$ and $I\wedge \T_2\models \perp$.

Notice that because the requirement on $f$ cannot be guaranteed in general, the above approach is not complete generally.
We will discuss under which condition the requirement can be guaranteed in the next section.
We implement the above method for synthesizing non-linear interpolants in general case by Algorithm~\ref{alg:syn-nonlinear}.

\begin{algorithm}[!htb]
  \SetKwData{Left}{left}\SetKwData{This}{this}\SetKwData{Up}{up}
  \SetKwFunction{Union}{Union}\SetKwFunction{FindCompress}{FindCompress}
  \SetKwInOut{Input}{input}\SetKwInOut{Output}{output}
  \Input{ $\T_1 \mbox{ and } \T_2 \mbox{ of the form }(\ref{SAS}), b $ }
  \Output{ An interpolant $\I$  or NULL }
  \SetAlgoLined
  \BlankLine

$g:=\Pi_{k=1}^t g_k^2$

$g:=g^{ \lfloor\frac{b}{\mathrm{deg}(g)}\rfloor }$

$\left\{ f_{t_1} \right\}:= \left\{  \Pi_{i\in v}f_i  \mbox{ for }  v\subseteq \left\{ 1,\ldots,s_1 \right\}  \right\}$ \;
$\left\{ f_{t_2} \right\}:= \left\{  \Pi_{i\in v}f_i  \mbox{ for }  v\subseteq \left\{ s_1+1,\ldots,s \right\}  \right\}$\;
$V_1=\mathcal{V}(\left\{ f_{t_1} \right\} \cup \left\{	h_1,\dots,h_{u_1} \right\})$\;
\tcc{Get all variables in polynomials system }
$V_2=\mathcal{V}(\left\{ f_{t_2} \right\} \cup \left\{	h_{u_1+1},\dots,h_{u} \right\})$\;
$V=V_1 \cap V_2$\;
$\left(  \left\{ f_{t_1},f_{t_2} \right\}, g, \left\{ h_1,\ldots,h_u \right\} \right)$ = {\tt subvariable}( $\left\{ f_{t_1},f_{t_2} \right\}$, $g$, $\left\{ h_1,\ldots,h_u \right\}$,$V$,$\left\{ h_{1},\dots,h_u \right\}$)\;
\tcc{ Replace  every uncommon variable $v$  by  a polynomial $h$ where $v=h$   as described  in Subsection 3.2}
sdp:=\cand ($\left\{ f_{t_1},f_{t_2} \right\}$, $g$, $\left\{ h_1,\ldots,h_u \right\} ,b$)

		  \eIf{ \emph{sdp} $\equiv$ \emph{NULL} }{
			\KwRet NULL
		  }{
			$\I:= \frac{1}{2}+\sum_{v \subseteq  \{1,\ldots ,s_1\}} p_{v}\Pi_{i\in v}f_i+q_1h_1+\cdots+q_{u_1}h_{u_1}+g>0 $\;
			\KwRet $I$\;
		  }
 \caption{\alg\label{alg:syn-nonlinear}
 }
\end{algorithm}

\begin{example} \label{exm:1}
Consider
\begin{eqnarray*}
\T_1=\left \{ \begin{array}{l}
    x_1^2+x_2^2+x_3^2-2\ge 0, \\
    x_1+x_2+x_3\neq 0, \\
    1.2x_1^2+x_2^2+x_1x_3=0
    \end{array}
  \right.
 & \mbox{ and } &
 \T_2=\left \{  \begin{array}{l}
   -3x_1^2-4x_2^3-10x_3^2+20\ge 0, \\
    2x_1+3x_2-4x_3\neq 0, \\
     x_1^2+x_2^2-x_3-1=0
    \end{array} \right.
  \end{eqnarray*}
% \begin{tabular}{ p{0.4\textwidth} p{0.4\textwidth} }

%\begin{tabular}{ p{0.5\textwidth} p{0.5\textwidth} }
 	\begin{Figure}
	 \centering
     % \begin{tabular}{cc}
	  \includegraphics[width=2.6in,height=1.6in]{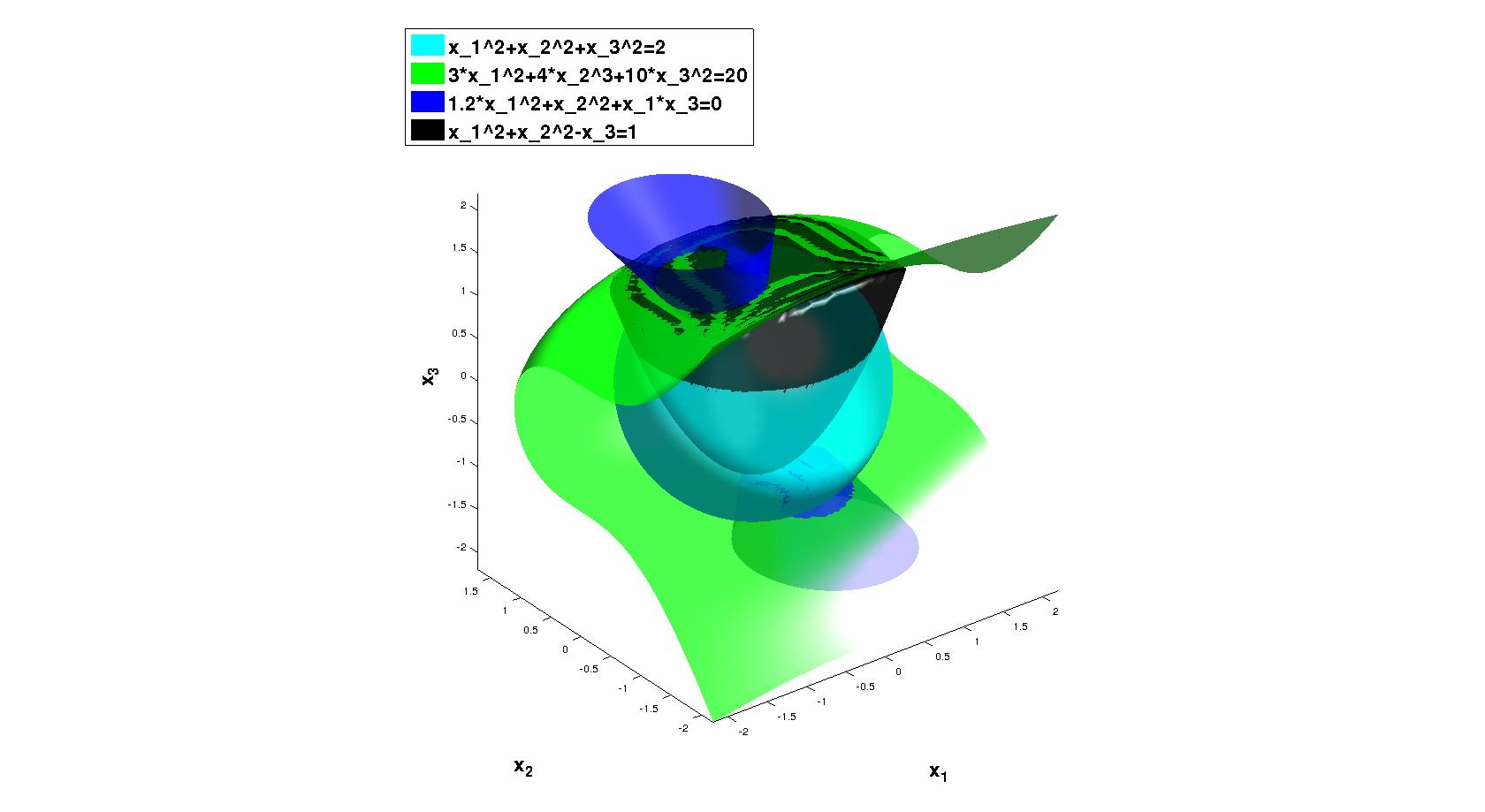}  \hspace*{-1.5cm}
      \includegraphics[width=2.6in,height=1.6in]{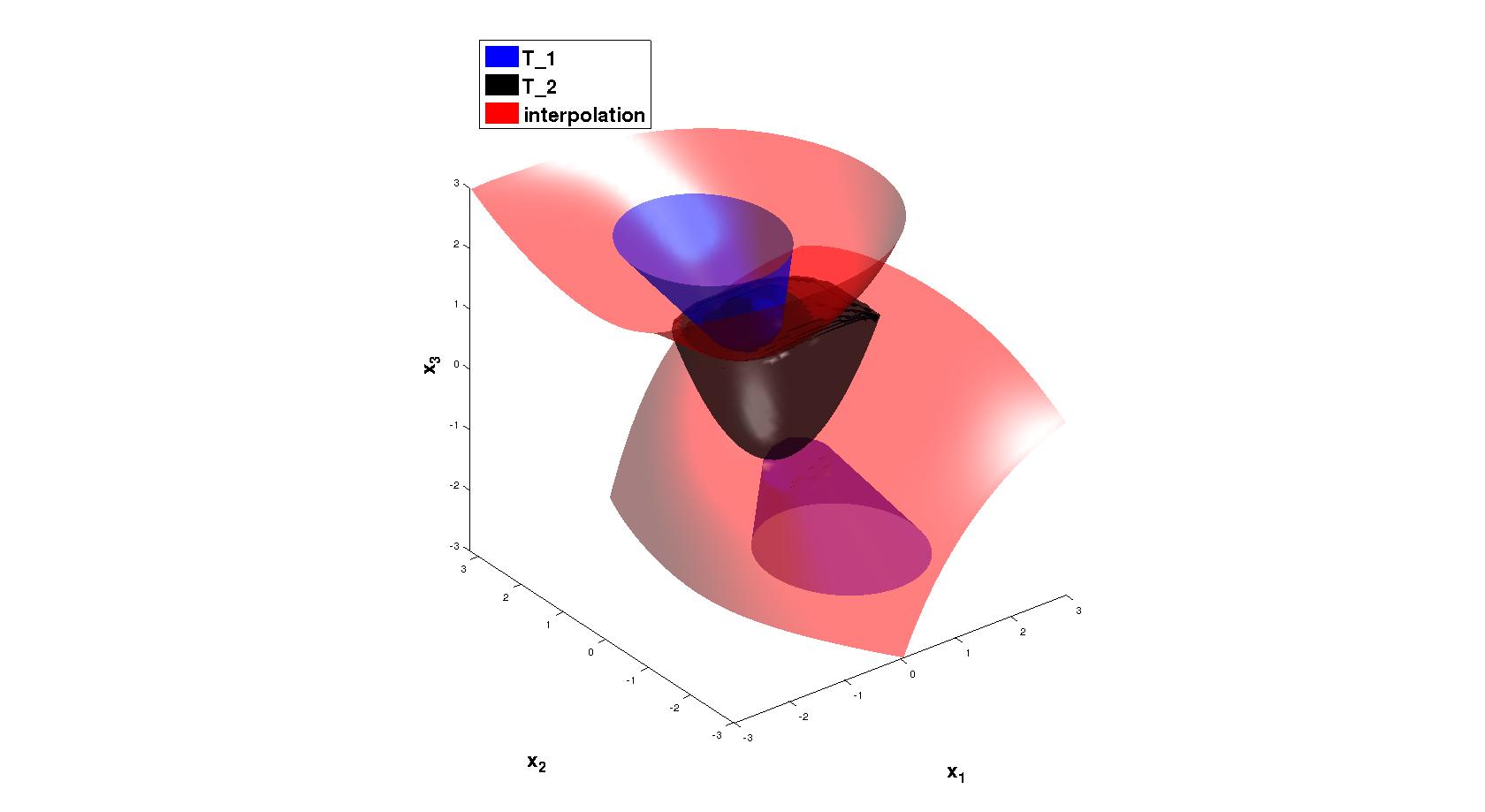}
	  \captionof{figure}{Example \ref{exm:1} \label{fig:1} }
     \end{Figure}

Clearly, $\T_1$ and $\T_2$ do not share any real solutions, see Fig.~\ref{fig:1} (left) \footnotemark.
\footnotetext{For simplicity, we do not draw $ x_1+x_2+x_3\neq 0$, nor $2x_1+3x_2-4x_3\neq 0$
 in the figure.}
 By setting $b=2$, after calling \cand, we obtain an interpolant $\I$ with $30$ monomials
{\small
$-14629.26 +2983.44x_3 +10972.97x_3^2 +297.62x_2 +297.64x_2x_3 +0.02x_2x_3^2 +9625.61x_2^2 -1161.80x_2^2x_3 +0.01x_2^2x_3^2 +811.93x_2^3 +2745.14x_2^4 -10648.11x_1 +3101.42x_1x_3 +8646.17x_1x_3^2 +511.84x_1x_2 -1034.31x_1x_2x_3 +0.02x_1x_2x_3^2 +9233.66x_1x_2^2 +1342.55x_1x_2^2x_3 -138.70x_1x_2^3 +11476.61x_1^2 -3737.70x_1^2x_3 +4071.65x_1^2x_3^2 -2153.00x_1^2x_2 +373.14x_1^2x_2x_3 +7616.18x_1^2x_2^2 +8950.77x_1^3 +1937.92x_1^3x_3 -64.07x_1^3x_2 +4827.25x_1^4$, }
  whose figure is depicted in Fig.~\ref{fig:1} (right).  \qed
\end{example}

\section{A Complete Algorithm Under Archimedean Condition} \label{sec:comp}
Our approach to synthesizing non-linear interpolants presented in Section~\ref{sec:sound} is incomplete generally as it requires that the polynomial $f$ in $\mathcal{C}(\{f_1,\ldots,f_s\})$
  produced by Algorithm~\ref{alg:sos}  can be represented by the sum of three polynomials, one of which
  is positive, the other two polynomials are respectively from $\mathcal{C}(\{f_1,\ldots,f_{s_1}\})$
   and $\mathcal{C}(\{f_{s_1+1},\ldots,f_s\})$.
   In this section, we show, under Archimedean condition, the requirement can be  indeed guaranteed. Thus, our approach will become complete. In particular, we shall argue Archimedean condition
   is a necessary and reasonable restriction in practice.

\subsection{Archimedean Condition}
To the end, we need more  knowledge of real algebraic geometry.
\begin{definition}[quadratic module]
For $g_1,\ldots,g_m\in \RR[\xx]$, the set
\begin{equation}
 \mathcal{M}(g_1,\ldots,g_m) = \{ \delta_0+\sum_{j=1}^m \delta_jg_j\  | \ \delta_0,\delta_j\in \mathcal{C}(\emptyset) \}
  \label{eq:module}
\end{equation}
is called the \emph{quadratic module} generated by $g_1,\ldots,g_m$.
A quadratic module $\mathcal{M}$ is called \emph{proper} if $-1\notin \MM$ (i.e.  $ \MM\neq \RR[\xx]$).
A quadratic module $\mathcal{M}$ is \emph{maximal} if for any $p \in \RR[\xx]\cap \overline{\MM}$,
   $\MM \cup \{p\}$ is not a quadratic module.
\end{definition}

In what follows, we will use $-\MM$ to denote $\{-p \mid p\in \MM\}$ for any given quadratic module
 $\MM$.

The following results are adapted from \cite{laurent} and will be used later, whose proofs can be found in \cite{laurent}.
\begin{lemma}[\cite{BCR98,laurent}]  \label{lem:max}
\begin{description}
\item[1)] If $\MM\subseteq \RR[\xx]$ is a quadratic module, then $I=\MM\cap -\MM$ is an ideal. \label{lem:mod}
\item[2)] If $\MM\subseteq \RR[\xx]$ is a maximal proper quadratic module, then $\MM\cup -\MM=\RR[\xx]$.
 \item[3)]
   $\{ \xx\in \RR^n\ | \ f(\xx)\geq 0\}$ is a compact set\footnote{$S$ is a compact  set  in $\RR^n$ iff $S$ is a  bounded closed set.  } for some  $ f\in  \MM(\{f_1,\ldots,f_s\})$ iff
   \begin{eqnarray}
% & & \exists n \in \NN. n-\sum^n_{i=1}x_i^2\in  \MM(f_1,\ldots, f_s),  \label{cond:2} \\
 & & \forall p \in \RR[\xx], \exists n\in \NN. n \pm p\in  \MM(f_1,\ldots,f_s).
    \label{cond:3}
%& &  \begin{split}
%  \exists p_1,\ldots,p_s\in \RR[\xx] \mbox{ s.t. } \forall I \subseteq \{1,\ldots,s\}.\prod_{i\in I} p_{i} \in  \MM(f_1,\ldots,f_s)\   \\
%  \mbox{and } \left\{ \xx\in \RR^n|p_1(\xx)\ge 0,\dots,p_s(\xx)\ge0 \right\} \mbox{ is compact.}
%\end{split}
%  \label{cond:4}
\end{eqnarray}
 \end{description}
\end{lemma}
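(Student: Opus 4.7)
The lemma has three parts, which I address in turn using elementary sum-of-squares manipulations together with the Positivstellensatz of Theorem~\ref{the:2}.

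\textbf{Part 1.} Closure of $I = \MM \cap -\MM$ under addition follows from the corresponding closure of $\MM$. For closure under multiplication by an arbitrary $p \in \RR[\xx]$, use the polarization identity $4pa = (p+1)^2 a + (p-1)^2(-a)$. From (\ref{eq:module}) it is immediate that $\MM$ is closed under multiplication by sums of squares, so both summands lie in $\MM$ whenever $a \in I$; swapping $a \leftrightarrow -a$ gives $-4pa \in \MM$, hence $pa \in I$.

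\textbf{Part 2.} I argue by contradiction: if some $p$ satisfies $p \notin \MM$ and $-p \notin \MM$, the quadratic modules $\MM_\pm := \MM \pm p \cdot \mathbf{SOS}$ strictly contain $\MM$ and by maximality are not proper, yielding
\[
-1 \;=\; s_+ + \sigma_+ p \;=\; s_- - \sigma_- p, \qquad s_\pm \in \MM,\ \sigma_\pm \in \mathbf{SOS}.
\]
Multiplying the first equation by $\sigma_-$ and the second by $\sigma_+$, then summing, cancels $p$ to give $-(\sigma_+ + \sigma_-) = \sigma_- s_+ + \sigma_+ s_- \in \MM$, so the nonzero SOS polynomial $\sigma_+ + \sigma_-$ lies in the ideal $I$ from Part 1. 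Forcing $-1 \in \MM$ from this — contradicting properness — leverages the real-algebra fact that the ideal associated with a maximal proper quadratic module is \emph{real}, so each square summand of $\sigma_+ + \sigma_-$ itself lies in $I$; cascading via the ring structure of $\RR[\xx]$ eventually places a nonzero scalar into $I$. Establishing reality of $I$ is the main delicacy of this part.

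\textbf{Part 3.} The forward direction is immediate: apply the hypothesis to $p = \sum_i x_i^2$ to obtain $n \in \NN$ with $f := n - \sum_i x_i^2 \in \MM$, whose non-negativity set is a closed ball, hence compact. For the converse — the main obstacle — I introduce
\[
A \;=\; \{\, p \in \RR[\xx] \mid \exists n \in \NN,\ n - p^2 \in \MM \,\}
\]
and aim to show $A = \RR[\xx]$. Membership $p \in A$ already gives the Archimedean condition for $p$ via $\tfrac{n+1}{2} \pm p = \tfrac12(n - p^2) + \tfrac12(p \mp 1)^2 \in \MM$. Closure of $A$ under sums and products uses the identities
\[
2(n+m) - (p+q)^2 \;=\; 2(n - p^2) + 2(m - q^2) + (p - q)^2,
\]
\[
nm - p^2 q^2 \;=\; (n - p^2)\, q^2 + n\,(m - q^2),
\]
both visibly in $\MM$ by SOS-closure. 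Since all constants lie in $A$, it suffices to show each coordinate $x_i \in A$, equivalently to produce some $N$ with $N - \|\xx\|^2 \in \MM$. By compactness of $\{f \geq 0\}$, pick $N$ with $N - \|\xx\|^2 > 0$ on $\{f \geq 0\}$; then the SAS $\{f \geq 0,\ \|\xx\|^2 - N \geq 0\}$ is infeasible, so Theorem~\ref{the:2} yields SOS $\sigma_0,\sigma_1,\sigma_2,\sigma_3$ with
\[
(\sigma_2 + \sigma_3 f)(N - \|\xx\|^2) \;=\; 1 + \sigma_0 + \sigma_1 f,
\]
whose right-hand side lies in $\MM$. The hardest step is converting this divisibility relation into the membership $N' - \|\xx\|^2 \in \MM$ itself — via further Positivstellensatz certificates or iteration — since the standard Schmüdgen-type compactness arguments naturally produce preorder, rather than quadratic-module, witnesses.
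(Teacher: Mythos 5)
The paper itself gives no proof of Lemma~\ref{lem:max}; it merely points to \cite{laurent}. Judged on its own terms, your Part~1 is correct (the identity $4pa=(p+1)^2a+(p-1)^2(-a)$ does the job, since a quadratic module is closed under addition and under multiplication by sums of squares), and in Parts~2 and~3 the opening steps are also sound — the derivations of $-1=s_\pm\pm\sigma_\pm p$, of $\sigma_++\sigma_-\in I$, of the closure identities for $A$, and of the Positivstellensatz certificate $(\sigma_2+\sigma_3 f)(N-\|\xx\|^2)=1+\sigma_0+\sigma_1 f$ all check out. But both Parts~2 and~3 are left with genuine unfilled gaps, which you partly acknowledge.

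In Part~2 the route through ``reality of $I$'' does not work as stated and is most likely circular: the standard way to prove that the support $I=\MM\cap-\MM$ of a maximal proper quadratic module is a \emph{real} (indeed prime) ideal is precisely to first establish $\MM\cup-\MM=\RR[\xx]$, which is what you are trying to prove. Moreover, even granting reality, having $\sigma_++\sigma_-\in I$ only yields that each square summand of this SOS lies in $I$; there is no ``cascading'' mechanism that forces a nonzero scalar into an ideal of $\RR[\xx]$ (compare $\langle x_1,x_2\rangle$, a real ideal containing $x_1^2+x_2^2$ but no nonzero constant). The clean finish stays entirely inside Part~1: from $-1=s_++\sigma_+p$ and $-1=s_--\sigma_-p$, multiplying the first by $\sigma_+\sigma_-$ and eliminating $\sigma_+\sigma_- p$ via the second gives $-\sigma_+^2=\sigma_+\sigma_-+s_+\sigma_+\sigma_-+s_-\sigma_+^2\in\MM$, so $\sigma_+^2\in I$; since $I$ is an ideal, $p^2\sigma_+^2\in I$; squaring $\sigma_+p=-(1+s_+)$ then yields $-1=-p^2\sigma_+^2+2s_++s_+^2\in\MM$, contradicting properness. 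In Part~3 the final step you flag as ``the hardest'' is indeed a genuine missing ingredient: passing from $(\sigma_2+\sigma_3 f)(N-\|\xx\|^2)\in\MM(f)$ to an honest membership $N'-\|\xx\|^2\in\MM(f)$ cannot be done by elementary manipulation, because $\sigma_2+\sigma_3 f$ may vanish on $\{f\ge 0\}$. What is needed here is Schm\"udgen's Positivstellensatz (or Kadison--Dubois/W\"ormann) for the single constraint $f\ge 0$: compactness of $\{f\ge 0\}$ and $N-\|\xx\|^2>0$ on that set give $N-\|\xx\|^2\in T(f)=\MM(f)$ directly, which then feeds into your (correct) ring-structure argument for $A$. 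That theorem is strictly stronger than, and not a consequence of, the Stengle-type Theorem~\ref{the:2}, so an appeal to it (or a self-contained proof of it) is unavoidable to close the argument.
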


\begin{definition}[Archimedean]
  For $g_1,\ldots,g_m\in \RR[\xx]$, the quadratic module $\MM(g_1,\dots,g_m)$ is said to be {\rm Archimedean} if the condition (\ref{cond:3}) holds.
\end{definition}

Let
\begin{eqnarray}\label{eq:t11}
\T_1 =  f_1(\xx)\geq 0,\ldots, f_{s_1}(\xx)\geq 0
  & \mbox{ and }  &
 \T_2=  f_{s_1+1}(\xx)\geq 0, \ldots,  \ f_{s}(\xx)\geq 0
\end{eqnarray}
be two SASs, where $ \left\{ f_i(\xx) \mid i=1,\ldots, s \right\} $ contains constraints on the upper and lower bounds of every variable $x_i$, and  $\T_1$  and $\T_2 $ do not share real solutions.

\begin{remark} \label{rem:1}
   Regarding $\left\{ f_1, \ldots, f_s \right\}$ in  (\ref{eq:t11}), as  every variable is bounded, assume $N-\sum_{i=1}^bx_i^2\in \{f_1,\ldots,f_s\}$ for a const $N$, then  $\MM(f_1,\ldots,f_s)$ is {\it Archimedean}.
  \end{remark}

\begin{lemma}\cite{BCR98,laurent}
  Let $\MM\subseteq \RR[\xx]$ be a maximal proper quadratic module which is {\rm Archimedean}, $I=\MM\cap-\MM$, and $f\in \RR[\xx]$,  then there exists $a \in \RR$ such that $f-a\in I$.
  \label{lem:exi}
\end{lemma}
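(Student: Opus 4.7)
The plan is to fix $f \in \RR[\xx]$ and study the two sets of real bounds of $f$ relative to $\MM$,
\[
 L_f = \{r \in \RR : f - r \in \MM\}, \qquad U_f = \{r \in \RR : r - f \in \MM\}.
\]
First I would assemble the easy structural facts. Lemma~\ref{lem:max}(2) gives $\MM \cup -\MM = \RR[\xx]$, which forces $L_f \cup U_f = \RR$. Positive real constants are single squares, hence lie in $\MM$, so $L_f$ is downward-closed and $U_f$ is upward-closed. Applying the Archimedean hypothesis to $f$ gives $n \in \NN$ with $n \pm f \in \MM$, so $-n \in L_f$ and $n \in U_f$, and both sets are nonempty. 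A negative real cannot lie in $\MM$ (rescaling by the appropriate square would yield $-1 \in \MM$, contradicting properness), so whenever $r \in L_f$ and $r' \in U_f$ the element $r' - r = (f - r) + (r' - f) \in \MM$ must be nonnegative. Hence $\sup L_f \le \inf U_f$, and $L_f \cup U_f = \RR$ rules out any gap, forcing $\sup L_f = \inf U_f =: c$.

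The remaining, and main, step is to upgrade ``$c$ is the meeting point'' to ``$c \in L_f \cap U_f$'', which yields $f - c \in \MM \cap (-\MM) = I$ and lets us take $a = c$. By $L_f \cup U_f = \RR$, $c$ lies in at least one of the two sets; without loss of generality assume $c \in U_f$, so $c - f \in \MM$, and the task is to show $f - c \in \MM$. Maximality of $\MM$ is invoked here: if $f - c \notin \MM$, then
\[
 \MM' := \{\, m + \sigma(f - c) : m \in \MM,\ \sigma \in \mathbf{SOS} \,\}
\]
is a quadratic module strictly containing $\MM$, so by maximality it is not proper, giving $\sigma \in \mathbf{SOS}$ with $\sigma(c - f) - 1 \in \MM$.

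The contradiction comes from an $\varepsilon$-squeeze that marries the supremum property to the Archimedean condition. For any $\varepsilon > 0$, downward closure of $L_f$ at its supremum places $c - \varepsilon \in L_f$, i.e.\ $f - c + \varepsilon \in \MM$; multiplying by $\sigma$ (allowed because $\MM$ is closed under $\mathbf{SOS}$-multiplication) and adding to $\sigma(c - f) - 1 \in \MM$ yields $\varepsilon\sigma - 1 \in \MM$. Applying the Archimedean condition to the polynomial $\sigma$ produces $N \in \NN$ with $N - \sigma \in \MM$; scaling by the square $\varepsilon$ and adding to $\varepsilon\sigma - 1 \in \MM$ gives $\varepsilon N - 1 \in \MM$ for every $\varepsilon > 0$. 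Choosing $\varepsilon < 1/N$ puts a negative real in $\MM$, which rescales to $-1 \in \MM$, contradicting properness. Hence $f - c \in \MM$, as required; the case $c \in L_f$ is handled symmetrically by swapping the roles of $f - c$ and $c - f$. The principal obstacle is precisely this endpoint step: once $L_f$ and $U_f$ pinch to a common $c$, maximality has to be used nontrivially via the extended quadratic module $\MM + \mathbf{SOS}\cdot(f - c)$, together with the Archimedean bound on the witness $\sigma$, to turn the arbitrarily small slack $\varepsilon$ into a forbidden negative constant.
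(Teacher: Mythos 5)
Your argument is correct, and it is essentially the standard proof of this Kadison--Dubois-type representation lemma. The paper itself does not reproduce a proof (it cites \cite{BCR98,laurent}), but the cited sources use exactly this Dedekind-cut style argument: form the upper and lower bound sets $L_f$, $U_f$, use $\MM \cup -\MM = \RR[\xx]$ (Lemma~\ref{lem:max}(2)) together with downward/upward closure and the Archimedean bound on $f$ to get a finite common cut point $c = \sup L_f = \inf U_f$, and then upgrade $c$ to a genuine element of $L_f \cap U_f$ by the $\varepsilon$-squeeze combined with maximality (via the enlarged module $\MM + \mathbf{SOS}\cdot(f-c)$) and the Archimedean bound on the SOS witness $\sigma$. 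All the small lemmas you invoke are justified: positive reals are squares and hence lie in $\MM$, $\MM$ absorbs multiplication by SOS elements, a negative real in $\MM$ rescales to $-1$ and violates properness, and the absence of a gap between $L_f$ and $U_f$ follows from $L_f \cup U_f = \RR$ once both are known to be nonempty half-lines with $\sup L_f \le \inf U_f$. The one definitional point worth flagging is that the paper's phrasing of maximality ($\MM \cup \{p\}$ ``is not a quadratic module'') must be read as ``the quadratic module generated by $\MM \cup \{p\}$ is not proper,'' which is precisely the interpretation you use when passing from $f - c \notin \MM$ to $-1 \in \MM + \mathbf{SOS}\cdot(f-c)$; this is the intended and standard reading, so the proof stands.
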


\begin{lemma}
  If $I$ is an ideal and there exists $\bfa=(a_1,\ldots,a_n)\in \RR^n$ such that $x_i-a_i\in I$ for $i=1,\ldots,n$, then for any $f\in \RR[\xx], f-f(\bfa)\in I$.
  \label{lem:in}
\end{lemma}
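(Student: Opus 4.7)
The plan is to reduce to the monomial case and then show by a direct algebraic expansion that each monomial differs from its value at $\bfa$ by an element of the ideal $I$. Since $I$ is in particular closed under $\RR$-linear combinations (additively closed and absorbing multiplication by constants, which are in $\RR[\xx]$), it suffices to establish the result for each monomial $\xx^{\mathbf{e}} = x_1^{e_1}\cdots x_n^{e_n}$ appearing in $f$, and then extend by linearity to the general polynomial $f = \sum_{\mathbf{e}} c_{\mathbf{e}} \xx^{\mathbf{e}}$.

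For a single monomial, I would use the substitution $x_i = a_i + (x_i - a_i)$ and expand the product
\[
x_1^{e_1}\cdots x_n^{e_n} = \prod_{i=1}^n \bigl(a_i + (x_i-a_i)\bigr)^{e_i}.
\]
Distributing, the single term in which the factor $(x_i - a_i)$ is never chosen contributes exactly $a_1^{e_1}\cdots a_n^{e_n}$, and every other term in the expansion contains at least one factor of the form $(x_i - a_i)$. Each such remaining term therefore lies in $I$, since $x_i - a_i \in I$ by hypothesis and $I$ is closed under multiplication by elements of $\RR[\xx]$. Hence $\xx^{\mathbf{e}} - \bfa^{\mathbf{e}} \in I$ for every multi-index $\mathbf{e}$.

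Summing these identities with the coefficients $c_{\mathbf{e}}$ of $f$ and using again that $I$ is closed under addition and scalar multiplication yields
\[
f(\xx) - f(\bfa) = \sum_{\mathbf{e}} c_{\mathbf{e}}\bigl(\xx^{\mathbf{e}} - \bfa^{\mathbf{e}}\bigr) \in I,
\]
which is the desired conclusion. A more conceptual way to see the same fact, which I might include as a remark, is that in the quotient ring $\RR[\xx]/I$ the residue classes of the generators satisfy $\overline{x_i} = \overline{a_i}$, so evaluating the ring homomorphism $\RR[\xx]\to \RR[\xx]/I$ on $f$ gives $\overline{f} = \overline{f(\bfa)}$. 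I do not expect any genuine obstacle here: the only point to be careful about is making the expansion step rigorous as a finite combinatorial identity, which the multinomial expansion handles cleanly.
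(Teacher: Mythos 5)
Your proof is correct, and it takes a genuinely more elementary route than the paper's. The paper observes that $\langle x_1-a_1,\ldots,x_n-a_n\rangle\subseteq I$ and then asserts that, because this generated ideal is radical and $(f-f(\bfa))(\bfa)=0$, one has $f-f(\bfa)\in\langle x_1-a_1,\ldots,x_n-a_n\rangle$; this is a Nullstellensatz-flavored step whose real content is simply that $\langle x_1-a_1,\ldots,x_n-a_n\rangle$ is the kernel of the evaluation homomorphism at $\bfa$. Your expansion $x_i=a_i+(x_i-a_i)$ on each monomial, followed by linearity, proves exactly that kernel identity from scratch, so your argument is self-contained where the paper's leans on a cited property of the ideal (and the mention of ``radical'' is arguably a detour, since what is really used is maximality / the evaluation map, not radicality per se, and the strong Nullstellensatz is not available over $\RR$ in general). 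The paper's route is shorter on the page; yours is more transparent, needs no algebraic-geometry input, and your closing remark about passing to $\RR[\xx]/I$ and using that $\overline{x_i}=\overline{a_i}$ captures the conceptual core of both arguments in one line.
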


\begin{proof}
  Because $x_i-a_i\in I$ for $i=1,\ldots,n$,  $\left<x_1-a_1,\ldots,x_n-a_n\right>\subseteq I$.
   For any $f\in \RR[\xx]$, $\left<x_1-a_1,\ldots,x_n-a_n\right>$ is
  a radical ideal\footnote{Ideal $I$ is a radical ideal if $I=\sqrt{I}=\{f|f^k\in I\ \mbox{for some integer}\ k\ge 0\}$.} and $(f-f(\bfa))(\bfa)=0$, so $f-f(\bfa)\in \left<x_1-a_1,\ldots,x_n-a_n\right>\subseteq I$.
   \qed
\end{proof}

\begin{theorem} \label{thm:4}
Suppose  $\left\{ f_1(\xx),\ldots, f_s(\xx) \right\}$  is given in (\ref{eq:t11}).
 If $\bigwedge_{i=1}^s(f_i\ge0)$ is unsatisfiable, then $-1\in  \MM(f_1,\ldots,f_s)$.
  \label{the:mine}
\end{theorem}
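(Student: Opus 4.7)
The plan is to argue by contradiction: suppose $-1 \notin \MM(f_1,\ldots,f_s)$, so that $\MM := \MM(f_1,\ldots,f_s)$ is a proper quadratic module, and derive a real common zero of $f_1 \ge 0,\ldots,f_s\ge 0$, contradicting the unsatisfiability hypothesis. First I would invoke Zorn's lemma on the collection of proper quadratic modules containing $\MM$, ordered by inclusion: the union of any chain is still closed under the cone operations and still avoids $-1$, so a maximal element $\MM^{*}$ exists. Since $\MM \subseteq \MM^{*}$ and the Archimedean condition (\ref{cond:3}) is manifestly inherited upward (any representation $n \pm p \in \MM$ is automatically in $\MM^{*}$), $\MM^{*}$ is Archimedean by Remark~\ref{rem:1}.

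Next, I would assemble the three ``structural'' lemmas in the excerpt. By Lemma~\ref{lem:max}(1), $I := \MM^{*} \cap -\MM^{*}$ is an ideal; by Lemma~\ref{lem:max}(2), maximality gives $\MM^{*} \cup -\MM^{*} = \RR[\xx]$; and by Lemma~\ref{lem:exi}, for each coordinate function $x_i$ there exists $a_i \in \RR$ with $x_i - a_i \in I$. Setting $\bfa = (a_1,\ldots,a_n)$, Lemma~\ref{lem:in} then yields $f - f(\bfa) \in I \subseteq \MM^{*}$ for \emph{every} $f \in \RR[\xx]$. Applying this to each $f_j$ and using $f_j \in \MM \subseteq \MM^{*}$, we get
\[
  f_j(\bfa) = f_j - (f_j - f_j(\bfa)) \in \MM^{*} \qquad (j = 1,\ldots,s).
\]

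The remaining step is the key sign argument: any real constant $c$ lying in a proper quadratic module must be non-negative. Indeed, if $c \in \MM^{*}$ and $c < 0$, then $(1/\sqrt{-c})^{2}$ is a square and $\MM^{*}$ is closed under multiplication by sums of squares, so $-1 = c \cdot (1/\sqrt{-c})^{2} \in \MM^{*}$, contradicting properness. Thus $f_j(\bfa) \ge 0$ for all $j$, so $\bfa \in \RR^{n}$ is a common real solution of $\bigwedge_{i=1}^{s} f_i \ge 0$, contradicting the hypothesis. Therefore the starting assumption fails, giving $-1 \in \MM(f_1,\ldots,f_s)$.

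I expect the main obstacle to be mainly one of careful bookkeeping rather than deep difficulty: one must be sure the Zorn's lemma step really produces a module that is \emph{both} maximal and proper and that inherits Archimedean-ness, and one must handle the ``constants in a proper quadratic module are non-negative'' step rigorously (the short square-reciprocal argument above). Everything else is a straightforward reduction to the three lemmas already stated.
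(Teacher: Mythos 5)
Your proposal is correct and follows essentially the same route as the paper's proof: Zorn's lemma to extend to a maximal proper quadratic module (inheriting the Archimedean property), then Lemmas~\ref{lem:exi} and~\ref{lem:in} to produce a point $\bfa$ with $f_j(\bfa)\in\MM^{*}$ for all $j$, contradicting unsatisfiability. The only difference is that you spell out explicitly, via the square-reciprocal argument, why a negative real constant cannot lie in a proper quadratic module — a step the paper's proof asserts without comment — which is a welcome but minor elaboration.
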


\begin{proof}
  By Remark \ref{rem:1}, $\MM\left( f_1,\ldots,f_s \right)$ is {\it Archimedean}.
 Thus, we only need to prove that the quadratic module $\MM(f_1,\ldots,f_s)$ is not proper.

  Assume  $\MM(f_1,\ldots,f_s)$ is proper. By Zorn's lemma, we can extend $\MM(f_1,\ldots,f_s)$ to a maximal proper quadratic module $\MM\supseteq \MM(f_1,\ldots,f_s)$. As $\MM(f_1,\ldots,f_s)$ is  Archimedean, $\MM$ is also Archimedean.
   By Lemma \ref{lem:exi}, there exists $\bfa=(a_1,\ldots,a_n)\in \RR^n$ such that $x_i-a_i\in I=\MM\cap -\MM$ for all $i\in \{1,\ldots,n\}$. From Lemma \ref{lem:in}, $f-f(\bfa)\in I$ for any $f\in \RR[\xx]$. In particular, for $f=f_j$, we have $f_j(\bfa)=f_j-(f_j-f_j(\bfa)) \in \MM$ since $f_j\in \MM(f_1,\ldots,f_s) \subseteq \MM$ and $-(f_j-f_j(\bfa))\in \MM$, which implies $f_j(\bfa)\geq 0$, for $j=1,\dots,s$. This contradicts to
   the unsatisfiability of  $\bigwedge_{i=1}^s(f_i\ge0)$. \qed
\end{proof}

 By Theorem \ref{the:mine} we have $-1 \in  \MM(f_1,\ldots,f_s )$.  So, there exist
 $\sigma_0,\ldots,\sigma_{s} \in \CC(\emptyset)$  such that
% \begin{eqnarray} \label{eq-ar}
$ -1 = \sigma_0+\sigma_1f_1+\cdots +\sigma_{s_1}f_{s_1}+
\sigma_{s_1+1}f_{s_1+1}+\cdots+f_s\sigma_s.$
%\end{eqnarray}
% From (\ref{eq-ar}),
It follows
 \begin{eqnarray} \label{eq-ar1}
 -(\frac{1}{2}+\sigma_{s_1+1}f_{s_1+1}+\cdots+\sigma_sf_s) & = & \frac{1}{2}+\sigma_0+\sigma_1 f_1+\cdots+\sigma_{s_1}f_{s_1}.
  \end{eqnarray}
  Let $q(\xx)=\frac{1}{2}+\sigma_0+\sigma_1f_1+\cdots + \sigma_{s_1}f_{s_1}$, we have
$\forall \xx \in \T_1. q(\xx)> 0$ and $\forall \xx \in \T_2. q(\xx)< 0$. Thus,
let $I= q(\xx)>0$.
According to Definition \ref{def:int}, $I$ is an interpolant of $\T_1$ and $\T_2$. So, under Archimedean condition, we can revise Algorithm~\ref{alg:syn-nonlinear} as Algorithm~\ref{alg:ualg}.

\begin{algorithm}[!htb]
  \SetKwData{Left}{left}\SetKwData{This}{this}\SetKwData{Up}{up}
  \SetKwFunction{Union}{Union}\SetKwFunction{FindCompress}{FindCompress}
  \SetKwInOut{Input}{input}\SetKwInOut{Output}{output}
  \Input{ $\T_1$ and $T_2$ as in (\ref{eq:t11}), $\left\{ h_{u_1+1},\dots,h_u \right\}$ }
  \tcc{ $h_{u_1+1},\dots,h_u$ are the equality occur in $T_2$}
  \Output{ $\I$   }
  \SetAlgoLined
  \BlankLine
  \SetKw{KwGoTo}{go to}

b=2\;
$V_1=\mathcal{V}( \left\{	f_1,\dots,f_{s_1} \right\})$\;
\tcc{Get all variables of $\T_1$}
$V_2=\mathcal{V}( \left\{	f_{s_1+1},\dots,f_{u} \right\})$\;
$V=V_1 \cap V_2$\;

$\left\{	f_{s_1+1},\dots,f_{u} \right\})$={\tt subvariable}($\left\{	f_{s_1+1},\dots,f_{u} \right\})$,$V$, $\left\{ h_{u_1},\dots,h_u \right\}$)\;
\tcc{Replaceing  every uncommon variable $v$   by polynomial $h$ where $v\ge h$ and $v\le h$ as described in Section 3.2}
\While{ true}{\label{outer_loop}

  sdp=\cand ($\left\{ f_1,\ldots,f_s \right\}$,0,$\left\{  \right\}$,$b$)\;

 \eIf{ {\rm sdp} $\neq$ {\rm NULL}  }{
			  $\I=\left\{ \frac{1}{2}+\sum_{i=1}^{s_1}p_if_i>0 \right\}$\;
			  $I'$={\tt subvariable}($I$,$V$, $\left\{ h_{u_1+1},\dots,h_u \right\}$)\;
			  \KwRet $I'$\;
		  }
		  {
			b=b+2\;
		  }
		}
		
  \caption{\ualg \label{alg:ualg}}
\end{algorithm}

\begin{example} \label{exm:2}
 Let $\Psi= \bigwedge_{i=1}^3 x_i\ge-2 \wedge -x_i\ge -2$, $f_1=-x_1^2-4x_2^2-x_3^2+2$, $f_2= x_1^2-x_2^2-x_1x_3-1$,
  $f_3=-x_1^2-4x_2^2-x_3^2+3x_1x_2+0.2$, $f_4=-x_1^2+x_2^2+x_1x_3+1 $. Consider
   $\T_1=  \Psi \wedge f_1\ge0 \wedge f_2\ge0$ and
   $\T_2= \Psi \wedge f_3\ge 0 \wedge f_4\ge 0$.  Obviously,
    $\T_1 \wedge T_2$ is unsatisfiable, see Fig.~\ref{fig:3} (left).

    %\begin{tabular}{ p{0.5\textwidth} p{0.5\textwidth} }
    \begin{Figure}
	 \centering
	 \includegraphics[width=2.6in,height=1.6in]{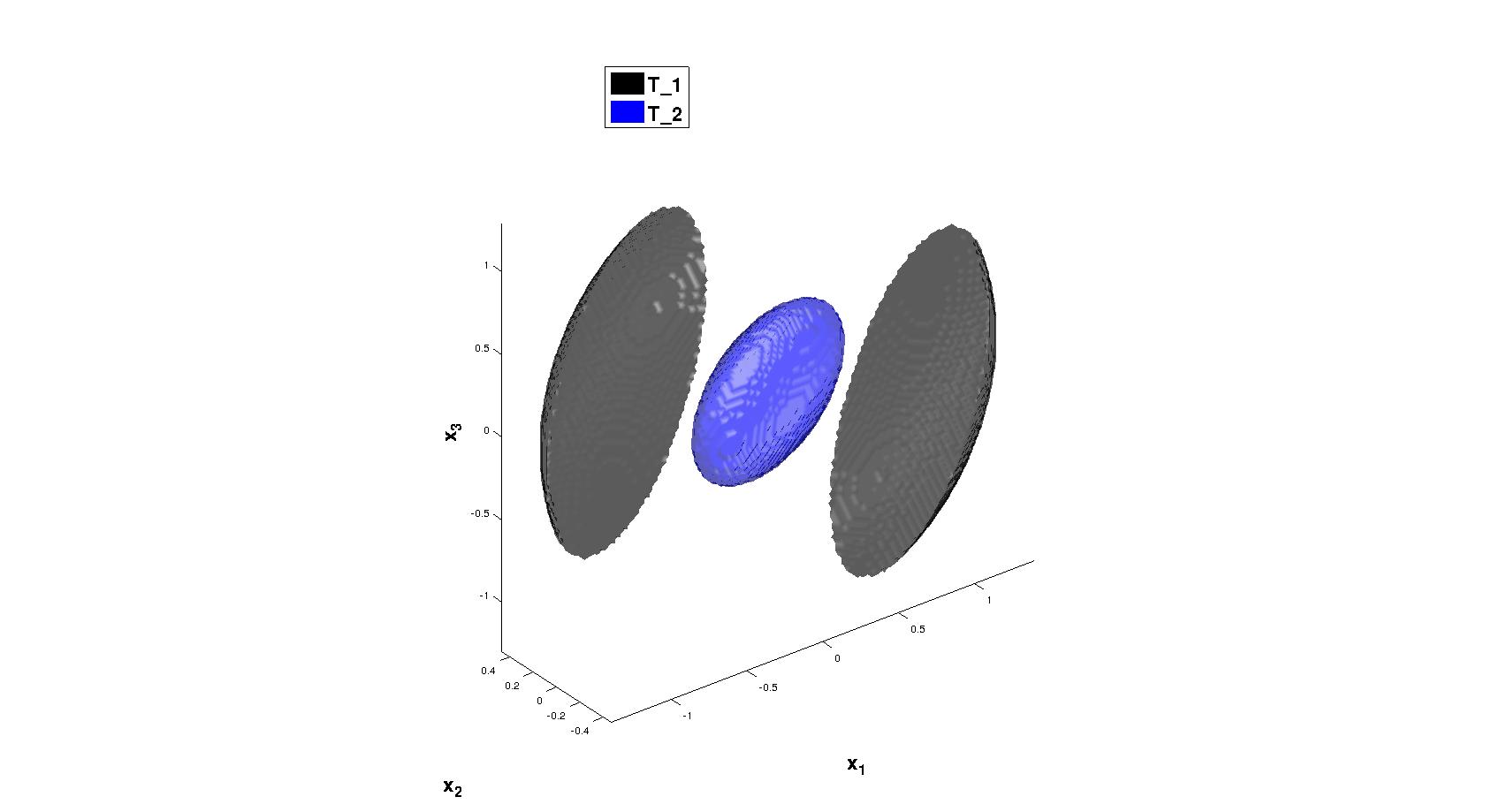} \hspace*{-1.5cm}
      \includegraphics[width=2.6in,height=1.6in]{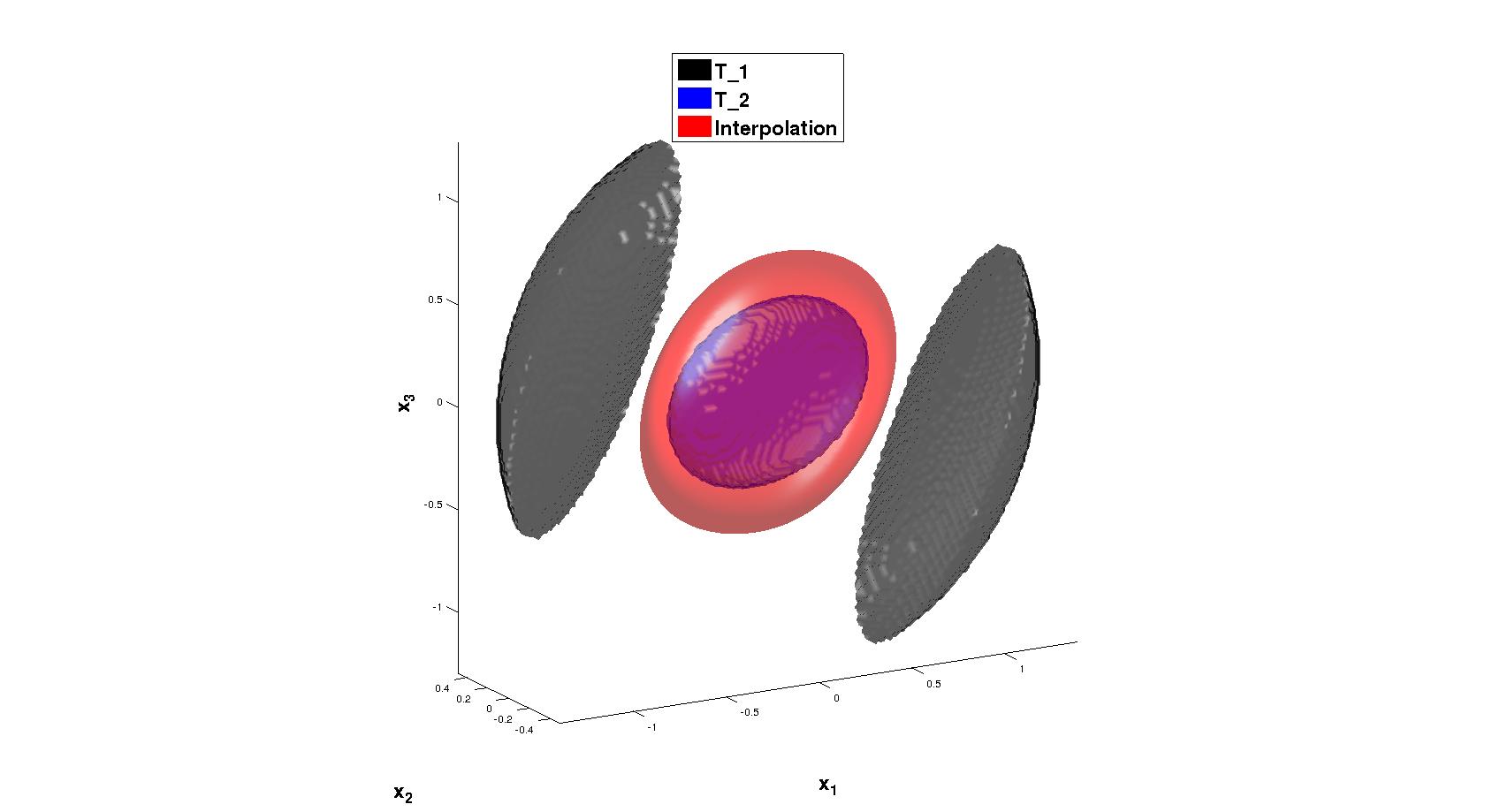}
	  \captionof{figure}{Example \ref{exm:2} \label{fig:3} }
	 \end{Figure}
	%\tabularnewline
	%\end{tabular}

By applying \ualg, we can get an interpolant as
{\small $ -33.7255x_1^4 + 61.1309x_1^3x_2 + 4.6818x_1^3x_3 - 57.927x_1^2x_2^2
  + 13.4887x_1^2x_2x_3 - 48.9983x_1^2x_3^2 - 8.144x_1^2 - 48.1049x_1x_2^3
  - 6.7143x_1x_2^2x_3 + 29.8951x_1x_2x_3^2 + 61.5932x_1x_2 + 0.051659x_1
  x_3^3 - 0.88593x_1x_3 - 34.7211x_2^4 - 7.8128x_2^3x_3 - 71.9085x_2^2x_3^2
  - 60.5361x_2^2 - 1.6845x_2x_3^3 - 0.5856x_2x_3 - 15.2929x_3^4 - 9.7563x_3^2 + 6.7326$,}
  which is  depicted in Fig \ref{fig:3} (right).
 In this example, the final value of $b$ is $2$. \qed
\end{example}

\subsection{Discussions}

\paragraph{\bf 1. Reasonability of Archimedean condition:}

Considering only bounded numbers can be represented in computer, so it is reasonable to constraint each variable with upper and lower bounds in practice. Not allowing strict inequalities indeed reduce the expressiveness from a theoretical point of view. However, as only numbers with finite precision can be represented in computer,
we always can relax a strict inequality to an equivalent non-strict inequality in practice. In a word, we believe
 \emph{Archimedean condition} is reasonable in practice.

%\subsection{Archimedean condition is necessary}
\paragraph{\bf 2. Necessity of Archimedean condition: } In Theorem~\ref{thm:4}, \emph{Archimedean condition}
is necessary. For example,
let $\T_1=\{x_1\ge 0, x_2\ge 0\}$ and $\T_2= \{-x_1x_2 -1 \geq 0\}$.
Obviously, $\T_1\wedge\T_2=\emptyset$ is not \emph{Archimedean} and unsatisfiable, but

\begin{theorem}
 %$\T_1\cap \T_2=\emptyset$, but
 $-1 \not \in M(x_1,x_2,-x_1x_2-1)$.
 \end{theorem}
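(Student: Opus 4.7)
The plan is to derive a contradiction from any putative identity
\[
-1 \;=\; \sigma_0 + \sigma_1 x_1 + \sigma_2 x_2 + \sigma_3(-x_1 x_2 - 1)
\]
with $\sigma_0,\sigma_1,\sigma_2,\sigma_3 \in \mathbf{SOS} \subset \RR[x_1,x_2]$. First I would view this as a polynomial identity in $x_1$ over the coefficient ring $\RR[x_2]$. Writing $\sigma_i = \sum_k A_i^{(k)}(x_2)\, x_1^k$ and $d_i := \deg_{x_1}(\sigma_i)$, I note two structural facts: since $\sigma_i$ is a two-variable sum of squares, $d_i$ is even, and the leading coefficient $A_i(x_2) := A_i^{(d_i)}(x_2) \in \RR[x_2]$ is pointwise nonnegative on all of $\RR$, because $\sigma_i(x_1,x_2) \sim A_i(x_2)\, x_1^{d_i}$ as $|x_1| \to \infty$ and $\sigma_i \geq 0$ combined with $d_i$ even forces $A_i(x_2) \geq 0$ at every real $x_2$.

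Next I would set $D^* := \max(d_0,\, d_1+1,\, d_2,\, d_3+1)$ and equate coefficients of $x_1^{D^*}$ on both sides. When $D^* > 0$ the right-hand side coefficient must be the zero polynomial in $\RR[x_2]$. A parity observation does most of the work: since $d_1, d_3$ are even, $d_1+1$ and $d_3+1$ are odd, so if $D^*$ is even only $\sigma_0$ and $x_2\sigma_2$ can contribute, giving
\[
\epsilon_0 A_0(x_2) + \epsilon_2\, x_2\, A_2(x_2) \;=\; 0,
\]
where $\epsilon_i \in \{0,1\}$ records whether $d_i = D^*$. Evaluating at any $x_2 > 0$ and using $A_0, A_2 \geq 0$ pointwise forces each active $A_i$ to vanish on $(0,\infty)$, hence identically as a polynomial, contradicting the very definition of a leading coefficient. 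The case $D^*$ odd is symmetric: one obtains
\[
\epsilon_1 A_1(x_2) - \epsilon_3\, x_2\, A_3(x_2) \;=\; 0,
\]
and the same nonnegativity trick, evaluated instead at $x_2 < 0$, delivers a contradiction.

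Finally, when $D^* = 0$, the degree conditions force $\sigma_1 \equiv 0$ and $\sigma_3 \equiv 0$, while $\sigma_0, \sigma_2 \in \RR[x_2]$. The identity becomes $-1 = \sigma_0(x_2) + x_2\sigma_2(x_2)$, and specializing at $x_2 = 1$ yields $-1 = \sigma_0(1) + \sigma_2(1) \geq 0$, a contradiction.

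The main obstacle I anticipate is the sub-case bookkeeping in the $D^* > 0$ step: one must enumerate, for each parity of $D^*$, which of the two candidate terms actually achieve the maximum degree, and in each scenario verify cleanly that nonnegativity of the SOS leading coefficients together with an appropriate sign choice for $x_2$ forces the identity to fail. Once this parity-plus-nonnegativity argument is in place, reducing to the base case $D^* = 0$ is routine.
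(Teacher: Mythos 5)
Your proof is correct, but it routes the argument differently from the paper. The paper fixes a graded monomial order on $\RR[x_1,x_2]$ and notes that the leading monomials of the four pieces $\delta_0$, $x_1\delta_1$, $x_2\delta_2$, $(x_1x_2+1)\delta_3$ have pairwise distinct exponent-parity signatures $(\mathrm{even},\mathrm{even})$, $(\mathrm{odd},\mathrm{even})$, $(\mathrm{even},\mathrm{odd})$, $(\mathrm{odd},\mathrm{odd})$, so no cancellation can occur at the top; since the leading monomial of $h=-1$ is the constant, only $\delta_0$ can contribute, forcing $\delta_1=\delta_2=\delta_3=0$ and then $\delta_0=-1$, a contradiction. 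You instead treat the identity as a polynomial in $x_1$ over $\RR[x_2]$, extract the leading $x_1$-coefficient, split by the parity of that degree, and finish with a sign evaluation of the resulting $\RR[x_2]$ identity at a suitably chosen $x_2$. Both proofs hinge on the same fact (an SOS has even degree in $x_1$ with a leading coefficient that is pointwise nonnegative in $x_2$), and both implicitly exploit that the constant $-1$ has the wrong parity type to appear. The paper's version buys brevity and symmetry: the two variables are treated at once, there is no case split, and no analytic reasoning over $\RR$ is needed. Your version buys elementariness: no monomial orders, only degree-in-$x_1$, an asymptotic as $x_1\to\infty$, and a point evaluation. Your sketch handles the sub-case bookkeeping correctly; the only thing worth stating explicitly is that $D^*$ is taken over the \emph{nonzero} $\sigma_i$ only (if all four vanish the identity $-1=0$ already fails), which is the same convention the paper silently adopts for its leading terms.
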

\begin{proof}
%  Obviously, $\T_1\cap \T_2=\emptyset$.
   Suppose these exist $\delta_0,\delta_1,\delta_2,\delta_3\in\CC(\emptyset)$ such that $h=\delta_0+x_1\delta_1+x_2\delta_2-(x_1x_2+1)\delta_3=-1$.
    Let $c_0x_1^{2a_0}x_2^{2b_0}$, $ c_1x_1^{2a_1+1}x_2^{2b_1}$, $ c_2x_1^{2a_2}x_2^{2b_2+1}$, and  $c_3x_1^{2a_3+1}x_2^{2b_3+1}$ be the leading terms of
$\delta_0$, $x_1\delta_1$, $x_2\delta_2$ and $(x_1x_2+1)\delta_3$, respectively,  according to
  the total degree order of monomials, where
 $c_i\ge 0$
 and $a_i,b_i\in\NN$.
 %$c_i=0$ iff the corresponding polynomial is zero polynomial)
  %Since indices of $c_0x_1^{2a_0}x_2^{2b_0}$, $ c_1x_1^{2a_1+1}x_2^{2b_2}$, $ c_2x_1^{2a_2}x_2^{2b_2+1}$, %$c_3x_1^{2a_3+1}x_2^{2b_3+1}$ are pairwise different, them can not vanish each other.
  Obviously, the four terms are pairwise different. So, the leading term of $h$ must be one of the four terms if they are not zero. This, together with $h=-1$, imply that $c_1=c_2=c_3=0$ and thus $\delta_1=\delta_2=\delta_3=0$. Therefore, $\delta_0=-1$, a contradiction.
%   It is easy to see that  the leading term $lt$ of $\delta_0+x_1\delta_1+x_2\delta_2-(x_1x_2+1)\delta_2$ is the same as that of
%  $c_0x_1^{2a_0}x_2^{2b_0} + c_1x_1^{2a_1+1}x_2^{2b_1} + c_2x_1^{2a_2}x_2^{2b_2+1}-c_3x_1^{2a_3+1}x_2^{2b_3+1}$ (abbreviated as $f$)
%   in case $f$ is not zero. As $f=0$ iff $c_0=c_1=c_2=c_3=0$, which impleis $-1=0$, which is a contradiction.
%   So
%  $-1=lt$ by the assumption. Therefore, $ lt=c_0x_1^{2a_0}x_2^{2b_0}$, then  $c_0=-1$,
%   which  contradicts to
%  $c_0\geq 0$.
%Hence, $-1 \not \in M(x_1,x_2,-x_1x_2-1)$.
\qed
\end{proof}

\section{Correctness and Complexity Analysis} \label{sec:cor}
The correctness of the algorithm \alg is obvious according to Theorem~\ref{the:3} and the discussion of Section 4. Its complexity just corresponds to one iteration of  the algorithm \ualg.
The correctness of  the algorithm {\ualg} is guaranteed by Theorem~\ref{the:3} and Theorem~\ref{thm:4}.
%In addition, let $d=max_i\{ deg(f_i)\}$, then the algorithm terminates after the iteration repeating at most $max(3,d)^n$ times, i.e., the maximal value of $b_f$ is $\max(3,d)^n$.
  The cost of  each iteration of \ualg\  depends on  the number of the variables $n$, the number of polynomial constraints $u$, and the current value of $b_f$. The size of $X$ in (\ref{eq:primal})
is $u\binom{n+b_f/2}{n}$ and the $m$ in (\ref{eq:primal}) is $\binom{n+b_f}{n}$. So, the complexity of applying interior method to solve the  {\sdp} is polynomial in  $ u\binom{n+b_f/2}{n} \binom{n+b_f}{n}$.
Hence, the cost of each iteration of \ualg\ is  $ u  \binom{n+b_f/2}{n}\binom{n+b_f}{n}$. Therefore, the total cost of \ualg\ is $b_f u \binom{n+b_f/2}{n}\binom{n+b_f}{n}$. For a given
problem, $n,u$ are fixed, so  the complexity of the algorithm becomes
polynomial in  $b_f$.
The complexity of Algorithm  \alg\ is the same as above discussions, except that the number of polynomial constraints is about $2^{s_1}+2^{s-s_1}$.

 As indicated in \cite{Parrilo03}, there are upper bounds on $b_f$, which are at least triply exponential. So our approach can enumerate all possible instances, but can not be done in polynomial time.

\section{Implementation and Experimental Results}\label{sec:exp}
We have implemented a prototypical tool of the algorithms described in this paper, called \aisat, which contains
  6000 lines of {\tt C++} codes.  \aisat\ calls {\tt Singular} \cite{singular} to deal with polynomial input and
\csdp\ to solve {\sdp}s. In \aisat, we design a specific algorithm to
 transform polynomial constraints to matrices constraints,  which indeed improves the efficiency of our tool
  very much, indicated by the comparison with \sostool\cite{PPP02} (see the table below). %\ref{tab:1}.
As a future work, we plan to implement a new {\sdp} solver with more  stability   and  convergence efficiency
 on solving {\sdp}s.

In the following, we report some experimental results by applying \aisat\ to some
benchmarks.

The first example is from \cite{bench1}, see the source code in Code~\ref{ex1}. We show its correctness by
applying \aisat \, to the following two possible executions.
 \begin{itemize}
  \item Subproblem $1$: Suppose  there is an execution starting from a state satisfying the assertion at line
   $13$ (obviously, the initial state satisfies the assertion), after $\rightarrow 6\rightarrow 7 \rightarrow 8 \rightarrow 9 \rightarrow 11 \rightarrow 12 \rightarrow 13$, ending at a state that does not satisfy
   the assertion.

  Then the interpolant synthesized by our approach is
   $716.77 +1326.74(ya) +1.33(ya)^2 +433.90(ya)^3 +668.16(xa) -155.86(xa)(ya) +317.29(xa)(ya)^2 +222.00(xa)^2 +592.39(xa)^2(ya) +271.11(xa)^3$, which guarantees that this execution is infeasible.
\item  Subproblem $2:$   Assume there is an execution starting from a state satisfying the assertion at line
   $13$, after $\rightarrow 6\rightarrow 7 \rightarrow 8 \rightarrow 10 \rightarrow 11 \rightarrow 12 \rightarrow 13$, ending at a state that does not satisfy
   the assertion.

 The interpolant generated by our approach is
{\small $716.95 +1330.91(ya) +67.78(ya)^2 +551.51(ya)^3 +660.66(xa) -255.52(xa)(ya) +199.84(xa)(ya)^2 +155.63(xa)^2 +386.87$ $(xa)^2(ya) +212.41(xa)^3$}, which guarantees this execution is infeasible either.
%\item So, {\tt ex 1} is not correct.
 \end{itemize}

\begin{tabular}[!htb]{ p{0.45\textwidth} p{0.55\textwidth} }
  \begin{lstlisting} [label=ex1,caption=ex1]
  int main () {
  int x,y;
  int xa := 0;
  int ya := 0;
  while (nondet()) {
	x := xa + 2*ya;
	y := -2*xa + ya;
	x++;
	if (nondet()) y= y+x;
	else y := y-x;
	xa := x - 2*y;
	ya := 2*x + y;}
  assert (xa + 2*ya >= 0);
  return 0;
}
  \end{lstlisting}

  &
\begin{lstlisting}[label=car,caption= An accelerating car]
vc:=0;
 /* the initial veclocity */
fr:=1000;
  /* the initial force */
ac:=0.0005*fr;
 /* the initial acceleration */
while ( 1 ) {
	fa:=0.5418*vc*vc;
      /* the force control */
	fr:=1000-fa;
	ac:=0.0005*fr;
	vc:=vc+ac;
	assert(vc<49.61);
       /* the safety velocity */
}
  \end{lstlisting}
  \tabularnewline
\end{tabular}

The second example {\tt accelerate} (see Code~\ref{car}) is from \cite{Kupferschmid11}.
 Taking
the air resistance into account, the relation between the car's velocity
and the physical drag contains quadratic functions. Due to air resistance
the velocity of the car cannot be beyond $49.61m/s$, which is a safety
property.
  Assume that there is an execution $ (vc<49.61)\rightarrow 8 \rightarrow 10 \rightarrow 11 \rightarrow 12 \rightarrow 13 (vc\geq 49.61) $.
 By applying  Applying \aisat, we can obtain an interpolant
$ -1.3983vc + 69.358>0$, which guarantees $vc<49.61$. So, {\tt accelerate} is correct.
we can synthesize an interpolant, which guarantees the safety property.

The last example  {\tt logistic} is also from \cite{Kupferschmid11}.
Mathematically, the logistic loop is written as $x_{n+1}=rx_n(1-x_n)$, where
$0\leq x_n \leq 1$.  When $r=3.2$, the logistic loop oscillates
between two values. The verification obligation is to guarantee that it is within the safe region  $(0.79\le x \wedge x\le 0.81 ) \vee ( 0.49 \le x \wedge x \le 0.51)$. By applying \aisat \, to
the following four possible executions, the correctness is obtained.
  \begin{itemize}
  \item Subproblem $1$:  $\{x\geq 0.79 \wedge x\leq 0.81\}$ {\tt logistic} $\{x>0.51\}$
    is invalidated by the synthesized interpolant
$108.92 -214.56x>0 $.

  \item Subproblem $2$: $\{x\geq 0.79\wedge x\leq 0.81\}$ {\tt logistic} $\{x< 0.49\}$
  is outlawed by  the synthesized interpolant
$ -349.86 +712.97x>0$.

\item Subproblem $3$: $\{x\geq 0.49 \wedge x\leq 0.51 \}$ {\tt logistic} $\{x> 0.81\}$
  is excluded by the generated interpolant
$ 177.21 -219.40x>0$.
  \item Subproblem $4$: $\{x\geq 0.49 \wedge x\leq 0.51\}$  {\tt logistic} $\{x< 0.79\}$ is
   denied by the generated  interpolant
$-244.85 +309.31x >0$.
 \end{itemize}

Some experimental results of applying \aisat\ to the above three examples on a desktop (64-bit Intel(R) Core(TM) i5 CPU 650 @ 3.20GHz, 4GB RAM memory and Ubuntu 12.04 GNU/Linux) are listed in the table below.
Meanwhile, as a comparison, we apply the SOSTOOLS to the three examples with the same computer.%, and

\begin{center}
  \begin{tabular}{ | c |  c | c| c |}
	\hline
	Benchmark & \#Subporblems   & \aisat\ (milliseconds) & \sostool\ (milliseconds)  \\ \hline
	{\tt ex1} & 2 &   60 & 3229 \\ \hline
%	{\tt f2} & 2 &   42 & 910 \\ \hline
%	{\tt JM06} & 2 &  70 & 1215 \\ \hline
	{\tt accelerate }& 1 &  940 & 879 \\ \hline
	{\tt logistic} & 4 &  20 & 761 \\
%	{\tt dcont1} & 1 & 1 & 30  \\
	\hline
  \end{tabular}
\end{center}

\section{Related work} \label{sec:rel}
In Introduction, we have introduced many work related to interpolant generation and its application to
program verification. In this section, we will mention some existing work on program verification, so that
we give a comparison between our approach and them.

Work on program verification can date back to the late sixties (or
early seventies) of the 20th century when  the so-called
\emph{Floyd-Hoare-Naur's inductive assertion
method}\cite{floyd67,hoare69,naur66} was invented, which was thought
as the dominant approach on automatic program verification. The
method is based on Hoare Logic\cite{hoare69}, by using \emph{pre-}
and \emph{post- conditions}, \emph{loop invariants} and
\emph{termination analysis} through ranking functions, etc.
Therefore,
 the discovery of loop invariants and ranking functions plays a
central role in proving the correctness of programs and  is also
thought of as the most challenging part of the approach.

Since then, there have been lots of attempts to handle invariant
generation of programs, e.g. \cite{wegbreit74,gw75,km76,karr76}, but
only with a limited success.
 Recently, due to the advance of computer algebra,
several methods based on symbolic computation have been applied
successfully to invariant generation, for example the techniques
based on abstract interpretation \cite{ch78,bjt99,rck04a,cousot05},
quantifier elimination \cite{css03,kapur04,CXYZ07} and polynomial algebra
\cite{ms04,rck04b,rck07,ssm04a}.

The basic idea behind the approaches based on abstract
interpretation is to perform an approximate symbolic execution of a
program until an assertion is reached that remain unchanged by
further executions of the program. However, in order to guarantee
termination, the method introduces imprecision by use of an
extrapolation operator called \emph{widening/narrowing}. This
operator often causes the technique to produce weak invariants.
Moreover, proposing widening/narrowing operators with certain
concerns of completeness is not easy and becomes a key challenge for
abstract interpretation based techniques \cite{ch78,bjt99}.

In contrast, approaches by exploiting the theory of polynomial
algebra to discover invariants of polynomial programs were proposed
in \cite{ms04,rck04b,rck07,ssm04a}. In \cite{ms04}, Mueller-Olm
and Seidl applied the technique of linear algebra to generate
polynomial equations of bounded degree as invariants of programs
with affine assignments. In \cite{rck04b,rck07}, Rodrigez-Carbonell
and Kapur
 first proved that the set of polynomials serving as
 loop invariants has the algebraic structure of ideal,
 then proposed an invariant generation algorithm by using
 fixpoint computation,
 and finally implemented the
 algorithm by the Gr\"{o}bner bases and the elimination theory.
  The
approach is theoretically sound and complete in the sense that if
there is an invariant of the loop that can be expressed as a
 conjunction of polynomial equations, applying the approach
   can indeed generate it.
 While in \cite{ssm04a}, the authors presented a similar approach to finding invariants represented by
 a polynomial equation  whose form is  priori determined (called templates) by
using an extended Gr\"{o}bner basis algorithm. The complexity of the above approaches are double exponential as
Gr\"{o}bner base technique is adopted.

Compared with  polynomial algebra based approaches that can only
generate
 invariants represented as polynomial equations,
 Col\'{o}n et al in  \cite{css03} proposed an approach to generate linear
 inequalities as invariants for linear programs,
    based on
\emph{Farkas' Lemma} and nonlinear constraint solving. The complexity depends on the complexity of
linear programming, which is in general is polynomial
in the number of variables.

  In addition, Kapur in \cite{kapur04} proposed a very general approach for
automatic generation of more expressive invariants by exploiting the
technique of quantifier elimination, and applied the approach to
Presburger Arithmetic and quantifier-free theory of conjunctively
closed polynomial equations. Theoretically speaking, the approach
can also be applied to the theory of real closed fields, but Kapur
also  pointed out in \cite{kapur04} that this  is impractical in
reality because of the high complexity of quantifier elimination,
which is doubly exponential \cite{dh88} in the number of quantifiers.
While in \cite{CXYZ07}, we improved Kapur's approach by using
 the theory of real root classification of SASs \cite{xy02}, with the complexity
 singly exponential in the number of variables and doubly exponential in the number of
 parameters.

Comparing with the approaches based on polynomial algebra, or Farkas' Lemma or Gr\"{o}bnes, our approach is more powerful, also more efficient except for Farkas' Lemma based approach. Comparing with
quantifier elimination based approach \cite{kapur04,CXYZ07}, our approach is much more efficient, even according to
the complexity analysis of quantifier elimination given in \cite{Brown}, which is doubly exponential
in the number of the quantifier alternation, and becomes singly exponential in the number of variables
and constraints in our setting\footnotemark. \footnotetext{Hoon Hong in \cite{HH91}
 pointed out the existing algorithms for the existential real theory  which are
 singly exponential in the number of variables  is far from realization, even worse than
  the general algorithm for quantifier elimination.}

\section{Conclusion} \label{sec:con}
The main contributions of the paper include:
\begin{itemize}
  %\item  We connect polynomial program with real algorithm theorem???. Provide a method to prove \satf\ is dissatisfiable.
  \item  We give a sound but not inomplete algorithm \alg for the generation of interpolants for non-linear arithmetic in general.
  \item  If the two systems satisfy {\it Archimedean condition}, we provide a more practical algorithm \ualg, which is not only sound but also complete, for generating Craig interpolants.
    \item We implement the above algorithms as a protypical tool \aisat,
      and demonstrate our approach by applying the tool to some benchmarks.
      %\ which can  efficiently decide whether a given SAS is empty.
%  \item Third, write a software \soft to modify the output of CSDP to verify output.
\end{itemize}

In the future, we will focus on how to combine non-linear arithmetic with other well-established
 decidable first order theories. In particular, we believe that we can use the method of \cite{sofronie06,Kupferschmid11} to extend our algorithm to uninterpreted functions. To investigate errors caused by numerical computation in {\sdp} is quite interesting.  In addition,
 to investigate the possibility to apply our results to the verification of hybrid systems is very significant.

\oomit{\paragraph{\bf Acknowledgements}

The work is partly supported by the National Natural Science Foundation of China (Grant No.11271034, No.11290141, No.)
and the project SYSKF1207 from SKLCS, IOS, the Chinese Academy of Sciences. }

\bibliographystyle{splncs}
\bibliography{semdefinter}

\begin{thebibliography}{10}

\bibitem{CE81}
Clarke, E.M., Emerson, E.A.:
\newblock Design and synthesis of synchronization skeletons using
  branching-time temporal logic.
\newblock In: Logic of Programs. (1981)  52--71

\bibitem{Issabelle}
Nipkow, T., Wenzel, M., Paulson, L.C.:
\newblock Isabelle/HOL: a proof assistant for higher-order logic.
\newblock Springer-Verlag, Berlin, Heidelberg (2002)

\bibitem{PVS}
Owre, S., Rushby, J., Shankar, N.:
\newblock Pvs: A prototype verification system.
\newblock In: CADE. Volume 607 of LNCS.
\newblock (1992)  748--752

\bibitem{cousot77}
Cousot, P., Cousot, R.:
\newblock Abstract interpretation: a unified lattice model for static analysis
  of programs by construction or approximation of fixpoints.
\newblock In: Proceedings of POPL'77. (1977)  238--252

\bibitem{bmc}
Biere, A., Cimatti, A., Clarke, E., Zhu, Y.:
\newblock Symbolic model checking without bdds.
\newblock In: TACAS'99. Volume 1579 of LNCS.
\newblock (1999)  193--207

\bibitem{cegar}
Clarke, E., Grumberg, O., Jha, S., Lu, Y., Veith, H.:
\newblock Counterexample-gided abstraction refinement.
\newblock In: CAV'00. Volume 1855 of LNCS.
\newblock (2000)  154--169

\bibitem{smt}
Nieuwenhuis, R., Oliveras, A., Tinelli, C.:
\newblock Solving sat and sat modulo theories: From an abstract
  davis--putnam--logemann--loveland procedure to dpll(t).
\newblock J. ACM \textbf{53}(6) (2006)  937--977

\bibitem{NelsonOppen}
Nelson, G., Oppen, D.C.:
\newblock Simplification by cooperating decision procedures.
\newblock ACM Trans. Program. Lang. Syst. \textbf{1}(2) (October 1979)
  245--257

\bibitem{dpll}
Davis, M., Logemann, G., Loveland, D.:
\newblock A machine program for theorem-proving.
\newblock Commun. ACM \textbf{5}(7) (1962)  394--397

\bibitem{craig}
Craig, W.:
\newblock Linear reasoning: A new form of the herbrand-gentzen theorem.
\newblock J. Symb. Log. \textbf{22}(3) (1957)  250--268

\bibitem{Z3}
de~Moura, L., Bjørner, N.:
\newblock Z3: An efficient smt solver.
\newblock In: TACAS'08. Volume 4963 of LNCS.
\newblock (2008)  337--340

\bibitem{Mcmilian03}
McMillan, K.L.:
\newblock Interpolation and sat-based model checking.
\newblock In: CAV'03. Volume 3920 of LNCS. (2003)  1--13

\bibitem{Mcmilian05}
McMillan, K.L.:
\newblock An interpolating theorem prover.
\newblock Theor. Comput. Sci. \textbf{345}(1) (2005)  101--121

\bibitem{predab}
Graf, S., Saidi, H.:
\newblock Construction of abstract state graphs with pvs.
\newblock In: CAV'97. Volume 1254 of LNCS.
\newblock (1997)  72--83

\bibitem{Henzinger04}
Henzinger, T.A., Jhala, R., Majumdar, R., McMillan, K.L.:
\newblock Abstractions from proofs.
\newblock In: POPL'04. (2004)  232--244

\bibitem{Mcmilian06}
McMillan, K.L.:
\newblock Lazy abstraction with interpolants.
\newblock In: CAV'06. Volume 4144 of LNCS. (2006)  123--136

\bibitem{wang11}
Jung, Y., Lee, W., Wang, B.Y., Yi, K.:
\newblock Predicate generation for learning-based quantifier-free loop
  invariant inference.
\newblock In: TACAS'11. Volume 6605 of LNCS.
\newblock (2011)  205--219

\bibitem{Yorsh05}
Yorsh, G., Musuvathi, M.:
\newblock A combination method for generating interpolants.
\newblock In: CADE'05. Volume 3632 of LNCS. (2005)  353--368

\bibitem{Rybalcheno10}
Rybalchenko, A., Sofronie-Stokkermans, V.:
\newblock Constraint solving for interpolation.
\newblock J. Symb. Comput. \textbf{45}(11) (2010)  1212--1233

\bibitem{Kupferschmid11}
Kupferschmid, S., Becker, B.:
\newblock Craig interpolation in the presence of non-linear constraints.
\newblock In: FORMATS'11. Volume 6919 of LNCS. (2011)

\bibitem{BCR98}
Bochnak, J., Coste, M., Roy, M.F.:
\newblock Real Algebraic Geometry.
\newblock Springer (1998)

\bibitem{Parrilo00}
Parrilo, P.A.:
\newblock Structured semidefinite programs and semialgebraic geometry methods
  in robustness and optimization.
\newblock PhD thesis, California Inst. of Tech. (2000)

\bibitem{Parrilo03}
Parrilo, P.A.:
\newblock Semidefinite programming relaxations for semialgebraic problems.
\newblock Mathematical Programming \textbf{96} (2003)  293--320

\bibitem{VB96}
Vandenberghe, L., Boyd, S.:
\newblock Semidefinite programming.
\newblock SIAM Review \textbf{38}(1) (1996)  49--95

\bibitem{CXYZ07}
Chen, Y., Xia, B., Yang, L., Zhan, N.:
\newblock Generating polynomial invariants with discoverer and qepcad.
\newblock In: Formal Methods and Hybrid Real-Time Systems. Volume 4700 of LNCS.
  (2007)  67--82

\bibitem{rck07}
Rodriguez-Carbonell, E., Kapur, D.:
\newblock Generating all polynomial invariants in simple loops.
\newblock Journal of Symbolic Computation \textbf{42} (2007)  443--476

\bibitem{laurent}
Laurent, M.:
\newblock Sums of squares, moment matrices and optimization over polynomials.
\newblock In: Emerging Applications of Algebraic Geometry. Volume 149 of The
  IMA Volumes in Mathematics and its Applications.
\newblock (2009)  157--270

\bibitem{singular}
Greuel, G.M., Pfister, G., Sch\"{o}nemann, H.:
\newblock Singular: a computer algebra system for polynomial computations.
\newblock ACM Commun. Comput. Algebra \textbf{42}(3) (2009)  180--181

\bibitem{PPP02}
Prajna, S., Papachristodoulou, A., Seiler, P., Parrilo, P.A.:
\newblock {SOSTOOLS}: Sum of squares optimization toolbox for {MATLAB}. (2004)

\bibitem{bench1}
Gulavani, B., Chakraborty, S., Nori, A., Rajamani, S.:
\newblock Automatically refining abstract interpretations.
\newblock In: TACAS'08. Volume 4963 of LNCS.
\newblock (2008)  443--458

\bibitem{floyd67}
Floyd, R.W.:
\newblock Assigning meanings to programs.
\newblock In: Proc. Symphosia in Applied Mathematics 19. (1967)  19--37

\bibitem{hoare69}
Hoare, C.:
\newblock An axiomatic basis for computer programming.
\newblock Comm. ACM \textbf{12}(10) (1969)  576--580

\bibitem{naur66}
Naur, P.:
\newblock Proofs of algorithms by general snapshops.
\newblock BIT \textbf{6} (1966)  310--316

\bibitem{wegbreit74}
Wegbreit, B.:
\newblock The synthesis of loop predicates.
\newblock Communications of the ACM \textbf{17}(2) (1974)  102--112

\bibitem{gw75}
German, S., Wegbreit, B.:
\newblock A synthesizer of inductive assertions.
\newblock IEEE Transactions on Software Engineering \textbf{1}(1) (1975)
  68--75

\bibitem{km76}
Katz, S., Manna, Z.:
\newblock Logical analysis of programms.
\newblock Communications of the ACM \textbf{19}(4) (1976)  188--206

\bibitem{karr76}
Karr, M.:
\newblock Affine relationships among variables of a program.
\newblock Acta Informatica \textbf{6} (1976)  133--151

\bibitem{ch78}
Cousot, P., Halbwachs, N.:
\newblock Automatic discovery of linear restraints among the variables of a
  program.
\newblock In: ACM POPL'78. (1978)  84--97

\bibitem{bjt99}
F.~Besson, T.J., Talpin, J.P.:
\newblock Polyhedral analysis of synchronous languages.
\newblock In: SAS'99. Volume 1694 of LNCS. (1999)  51--69

\bibitem{rck04a}
Rodriguez-Carbonell, E., Kapur, D.:
\newblock An abstract interpretation approach for automatic generation of
  polynomial invariants.
\newblock In: SAS'04. Volume 3148 of LNCS. (2004)  280--295

\bibitem{cousot05}
Cousot, P.:
\newblock Proving program invariance and termination by parametric abstraction,
  langrangian relaxation and semidefinite programming.
\newblock In: VMCAI'05. Volume 3385 of LNCS. (2005)  1--24

\bibitem{css03}
M.~Col\'{o}n, S.S., Sipma, H.:
\newblock Linear invariant generation using non-linear constraint solving.
\newblock In: CAV'03. Volume 2725 of LNCS. (2003)  420--432

\bibitem{kapur04}
Kapur, D.:
\newblock Automatically generating loop invariants using quantifier
  elimination.
\newblock In: Intl. Conf. on Applications of Computer Algebra (ACA'04). (2004)

\bibitem{ms04}
M\"{u}ller-Olm, M., Seidl, H.:
\newblock Precise interprocedural analysis through linear algebra.
\newblock In: ACM POPL'04. (2004)  330--341

\bibitem{rck04b}
Rodriguez-Carbonell, E., Kapur, D.:
\newblock Automatic generation of polynomial loop invariants: algebraic
  foundations.
\newblock In: ISSAC'04. (2004)

\bibitem{ssm04a}
S.~Sankaranarayanan, H.S., Manna, Z.:
\newblock Non-linear loop invariant generation using gr\"{o}bner bases.
\newblock In: ACM POPL'04. (2004)  318--329

\bibitem{dh88}
J.~H., D., Heintz, J.:
\newblock Real elimination is doubly exponential.
\newblock J. of Symbolic Computation \textbf{5} (1988)  29--37

\bibitem{xy02}
Xia, B., Yang, L.:
\newblock An algorithm for isolating the real solutions of semi-algebraic
  systems.
\newblock J. Symbolic Computation \textbf{34} (2002)  461--477

\bibitem{Brown}
Brown, C.W.:
\newblock The Complexity of Quantifier Elimination and Cylindrical Algebraic
  Decomposition.
\newblock United States Naval Academy

\bibitem{HH91}
Hong, H.:
\newblock Comparison of several decision algorithms for the exis- tential
  theory of the reals (1991)

\bibitem{sofronie06}
Sofronie-Stokkermans, V.:
\newblock Interpolation in local theory extensions.
\newblock In: Automated Reasoning.
\newblock (2006)  235--250

\end{thebibliography}
%\bibitem{Putinar93} M. Putinar, Positive polynomials on compact semi-algebraic sets. {\it Indiana Uni-
 % versity Mathematics Journal} 42: pages 969--984, 1993.

\end{document}